
\documentclass[10pt]{IEEEtran}%
\usepackage{bbm}
\usepackage{epsfig}
\usepackage{easybmat}
\usepackage{graphics,graphicx,amssymb,amsmath,verbatim}
\usepackage{amssymb}
\usepackage{mathrsfs}
\usepackage{amsfonts}
\usepackage{amsmath}
\usepackage{graphicx}
\usepackage{multirow}
\usepackage{easybmat}
\usepackage{algorithm}
\usepackage{algorithmic}
\usepackage{inputenc}
\UseRawInputEncoding
\setcounter{MaxMatrixCols}{30}
%
\providecommand{\U}[1]{\protect \rule{.1in}{.1in}}
\newtheorem{theorem}{\quad Theorem}

\newtheorem{corollary}{\quad Corollary}

\newtheorem{remark}{\quad Remark}

\newenvironment{proof}[1][Proof]{\noindent \textbf{#1.} }{\  \rule{0.5em}{0.5em}}
\ifCLASSINFOpdf \else \fi
\allowdisplaybreaks[4]
\begin{document}

\title{A Robust Time-Delay Approach to Extremum Seeking via ISS Analysis of the Averaged System
\thanks{*This work was supported by the Planning and Budgeting Committee (PBC) Fellowship from the Council for Higher Education in Israel and by Israel Science Foundation (Grant No 673/19).}
}
\author{Xuefei Yang\thanks{Xuefei Yang and Emilia Fridman are with the School of Electrical Engineering,
 Tel-Aviv University, Israel (e-mail addresses: \texttt{xfyang1989@163.com, emilia@tauex.tau.ac.il}).}
\quad Emilia Fridman, \emph{Fellow, IEEE}
}
\date{}
\maketitle

\begin{abstract}
For N-dimensional (ND) static quadratic map, we present a time-delay approach to gradient-based extremum seeking (ES) both, in the continuous and, for the first time, the discrete domains. As in the recently introduced (for 2D maps in the continuous domain), we transform the system to the time-delay one (neutral type system in the form of Hale in the continuous case). This system is O($\varepsilon$)-perturbation of the averaged linear ODE system, where $\varepsilon$ is a period of averaging. We further explicitly present the neutral system as the linear ODE, where O($\varepsilon$)-terms are considered as disturbances with distributed delays of the length of the small parameter $\varepsilon$. Regional input-to-state stability (ISS) analysis is provided by employing a variation of constants formula that greatly simplifies the previously used analysis via Lyapunov-Krasovskii (L-K) method, simplifies the conditions and improves the results. Examples from the literature illustrate the efficiency of the new approach, allowing essentially large uncertainty of the Hessian matrix with bounds on $\varepsilon$ that are not too small.
\end{abstract}

\begin{IEEEkeywords}
Extremum seeking, Averaging, Time-delay, ISS.
\end{IEEEkeywords}

\IEEEpeerreviewmaketitle

\section{Introduction}

ES is a model-free, real-time on-line adaptive optimization control method. Under the premise of the existence of extremum value, the ES control can search the extremum value without relying on the prior knowledge of the input-output mapping relationship.
Because of its advantages of simple principle, low computational complexity and model free, ES control is used in many fields including anti-lock braking system \cite{xc20tvt,zo07tac}, aircraft formation flight \cite{bakb03jgcd}, maximum power point tracking of new energy generation such as solar \cite{hqt22spec}, wind \cite{so21ener} and fuel cells \cite{zamrg18tie}.

In 2000, Krstic and Wang gave the first rigorous stability analysis for an ES system by using averaging and singular perturbations in \cite{kw00auto}. Later on, this result was extended to the ES control for discrete-time systems \cite{ckal02tac}. Krstic's pioneer work laid a theoretical foundation for the research development of ES. Subsequently, a great amount of theoretical studies on ES are emerging. In \cite{tnm06auto,tnm09auto}, Tan et al. studied the non-local characteristics of perturbation ES, and extended the classical perturbation ES control to semi-global and global ES control. In \cite{lk10tac,lk16tac}, by combining the stochastic averaging theory with the ES theory, Liu and Krstic established a theoretical framework for stochastic ES in finite-dimensional space by selecting random signals as dither signals. In \cite{mmb10tac}, Moase et al. proposed a Newton-based ES algorithm, which can remove the dependence of the convergence rate on the unknown Hessian matrix. The Newton-based ES algorithm was later extended to the multi-variable case in \cite{gkn12auto}, which yields arbitrarily assignable convergence rates for each of the elements of the input vector. In \cite{dsej13auto}, Durr et al. introduced a novel interpretation of ES by using Lie bracket approximation, and shown that the Lie bracket system directly reveals the optimizing behavior of the ES system. In \cite{gd17auto}, Guay and Dochain proposed a proportional-integral extremum-seeking controller design technique that minimizes the impact of a time-scale separation on the transient performance of the ES controller. Recently, Oliveira et al. in \cite{okt17tac} first proposed a solution to the problem of designing multi-variable ES algorithms for delayed systems via standard predictors and backstepping transformation. Different from the standard prediction (which led to distributed terms in the control) used in \cite{okt17tac}, Malisoff et al. in \cite{mk21auto} used a one-stage sequential predictor approach to solve multi-variable ES problems with arbitrarily long input or output delays. Some more relevant research can be founded in the literature \cite{dksj17auto,hj17tac,mmwzs22arx,sut19auto}.

The conventional approach to analyze the stability of ES systems is dependent upon the classical averaging theory in finite dimensions (see \cite{khalil02book}) or infinite dimensions (see \cite{hl90jiea}). The basic idea is to approximate the original system by a simpler (averaged) system, namely, the practical stability of the original system can be guaranteed by the (asymptotically) stability of the averaged system, for sufficiently small parameter. However, these methods only provide the qualitative analysis, and cannot suggest quantitative upper bounds on the parameter that preserves the stability. Recently a new constructive time-delay approach to the continuous-time averaging was introduced in \cite{fz20auto}. This approach allows, for the first time, to derive efficient linear matrix inequality (LMI)-based conditions for finding the upper bound of the small parameter that ensures the stability. Later on, the time-delay approach to averaging was successfully applied to the quantitative stability  analysis of continuous-time ES algorithms (see \cite{zf22auto}) and sampled-data ES algorithms (see \cite{zfo22tac}) in the case of static maps by constructing appropriate Lyapunov--Krasovskii (L-K) functionals. However, the analysis via L-K method is complicated and the results are conservative, since only small uncertainties in Hessian and initial conditions are available.

In this paper, we suggest a robust time-delay approach to ES via ISS analysis of the averaged system both in the continuous and the discrete domains. After transforming the ES dynamics into a time-delay neutral type model as in \cite{zf22auto,zfo22tac}, we further transform it into an averaged ODE perturbed model, and then use the variation of constants formula instead of L-K method to quantitatively analyze the practical stability of the ODE system (and thus of the original ES system). Explicit conditions in terms of simple inequalities are established to guarantee the practical stability of the ES control systems. Through the solution of the constructed inequalities, we find upper bounds on the dither period that ensures the practical stability. Compared with the existing results, the main contribution of this paper and the significance of the obtained results can be stated as follows. First, comparatively to the considered continuous-time ES systems of one and two input variables in \cite{zf22auto,zfo22tac}, in the present paper we consider the N-variable case with arbitrary positive integer N, which is more general. Second, we develop, for the first time, the time-delay approach to ES control for discrete-time systems, and provide a quantitative analysis on the control parameters and the ultimate bound of seeking error. Third, comparatively to the L-K method utilized for neutral type systems in \cite{zf22auto,zfo22tac}, here we adopt the variation of constants formula for the ODE systems. This greatly simplifies the stability analysis process along with the stability conditions, and improve the quantitative bounds as well as the permissible range of the extremum value and the Hessian matrix. Moreover, our approach allows a larger decay rate and a smaller ultimate bound on the estimation error.

The paper¡¯s rest organization is as follows: In Section \ref{sec2} and Section \ref{sec3}, we apply the time-delay approach to the continuous-time ES and discrete-time ES, respectively. Each section contains two subsections: the theoretical results and examples with simulation results. Section \ref{sec4} concludes this paper.

\textbf{Notation:} The notation used in this article is fairly standard. For two integers $p$ and $q$ with $p\leq q,$ the notation $\mathbf{I}\left[p,q\right]  $ refers to the set $\left \{  p,p+1,\ldots,q\right \}  .$ The notations $\mathbf{N}_{+}$, $\mathbf{N}$ and $\mathbf{Z}$ refer to the set of positive integers, nonnegative integers and integers, respectively. The notation $P>0$ for $P\in \mathbf{R}^{n\times n}$ means that $P$ is symmetric and positive definite. The symmetric elements of the symmetric matrix are denoted by $\ast.$ The notations $\left \vert \cdot \right \vert $ and $\left \Vert \cdot \right \Vert $ refer to the usual Euclidean vector norm and the induced matrix $2$ norm, respectively.

\section{Continuous-Time ES} \label{sec2}

\subsection{A Time-Delay Approach to ES}

Consider the multi-variable static map given by%
\begin{equation}
\left.  y(t)=Q(\theta(t))=Q^{\ast}+\frac{1}{2}[\theta(t)-\theta^{\ast
}]^{\mathrm{T}}H[\theta(t)-\theta^{\ast}],\right.  \label{eq53}%
\end{equation}
where $y(t)\in \mathbf{R}$ is the measurable output, $\theta(t)\in
\mathbf{R}^{n}$ is the vector input, $Q^{\ast}\in \mathbf{R}$ and $\theta
^{\ast}\in \mathbf{R}^{n}$ are constants, $H=H^{\mathrm{T}}\in \mathbf{R}%
^{n\times n}$ is the Hessian matrix which is either positive definite or
negative definite. Without loss of generality, we assume that the static map
(\ref{eq53}) has a minimum value $y(t)=Q^{\ast}$ at $\theta(t)=\theta^{\ast},$
namely,%
\[
\left.  \left.  \frac{\partial Q}{\partial \theta}\right \vert _{\theta
=\theta^{\ast}}=0,\text{ }\left.  \frac{\partial^{2}Q}{\partial \theta^{2}%
}\right \vert _{\theta=\theta^{\ast}}=H>0.\right.
\]
Usually, the cost function is not known in (\ref{eq53}), but we can manipulate
$\theta(t)$. In the present paper, we assume that

\textbf{A1} The extremum point $\theta^{\ast}$ to be sought is uncertain from
a known ball where each of its elements satisfies $\theta_{i}^{\ast}%
(0)\in \lbrack \underline{\theta}_{i}^{\ast},\bar{\theta}_{i}^{\ast}]$
($i\in \mathbf{I[}1,n\mathbf{]}$) with $%
{\textstyle \sum \nolimits_{i=1}^{n}}
(\bar{\theta}_{i}^{\ast}-\underline{\theta}_{i}^{\ast})^{2}=\sigma_{0}^{2}.$

\textbf{A2} The extremum value $Q^{\ast}$ is unknown, but it is subject to
$\left \vert Q^{\ast}\right \vert \leq Q_{M}^{\ast}$ with $Q_{M}^{\ast}$ being known.

\textbf{A3} The Hessian matrix $H$ is unknown, but it is subject to $H=\bar
{H}+\Delta H$ with $\bar{H}>0$ being known and $\left \Vert \Delta H\right \Vert
\leq \kappa.$ Here $\kappa \geq0$ is a given scalar.

Under \textbf{A3}, there exist two positive scalars $H_{m}$ and $H_{M}$ such
that%
\begin{equation}
\left.  H_{m}\leq \left \Vert H\right \Vert \leq H_{M}.\right.  \label{eq71}%
\end{equation}

\begin{figure}[ptb]
\centering
\includegraphics[width=0.4\textwidth]{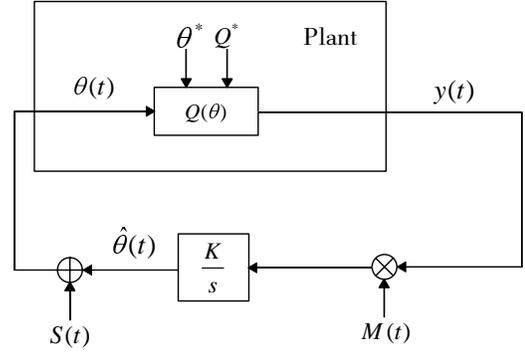}\caption{Extremum seeking
control scheme for continuous-time systems}%
\label{ES1}%
\end{figure}

The gradient-based classical ES algorithm depicted in Fig. \ref{ES1} is
governed by the following equations:%
\begin{equation}
\left.  \theta(t)=\hat{\theta}(t)+S(t),\text{ }\dot{\hat{\theta}%
}(t)=KM(t)y(t),\right.  \label{eq54}%
\end{equation}
where $\hat{\theta}(t)$ is the real-time estimate of $\theta^{\ast},$ $S(t)$
and $M(t)$ are the dither signals satisfying%
\begin{equation}
\left.
\begin{array}
[c]{l}%
S(t)=[a_{1}\sin(\omega_{1}t),\ldots,a_{n}\sin(\omega_{n}t)]^{\mathrm{T}},\\
M(t)=\left[  \frac{2}{a_{1}}\sin(\omega_{1}t),\ldots,\frac{2}{a_{n}}%
\sin(\omega_{n}t)\right]  ^{\mathrm{T}},
\end{array}
\right.  \label{eq54a}%
\end{equation}
in which $\omega_{i}\neq \omega_{j},i\neq j$ are non-zero, $\omega_{i}%
/\omega_{j}$ is rational and $a_{i}$ are real number. The adaptation gain $K$
is chosen as%
\[
K=\mathrm{diag}\{k_{1},k_{2},\ldots,k_{n}\},\text{ }k_{i}<0,\text{ }%
i\in \mathbf{I[}1,n\mathbf{]}%
\]
such that $KH$ (and also $K\bar{H}$) is Hurwitz (for instance, $K=kI_{n}$ with
a scalar $k<0$).

Define the estimation error $\tilde{\theta}(t)$ as%
\[
\tilde{\theta}(t)=\hat{\theta}(t)-\theta^{\ast}.
\]
Then by (\ref{eq54}), the estimation error is governed by%
\begin{equation}
\left.
\begin{array}
[c]{l}%
\dot{\tilde{\theta}}(t)=KM(t)\big[Q^{\ast}+\frac{1}{2}S^{\mathrm{T}%
}(t)HS(t)+\frac{1}{2}\tilde{\theta}^{\mathrm{T}}(t)H\tilde{\theta}(t)\\
+S^{\mathrm{T}}(t)H\tilde{\theta}(t)\big].
\end{array}
\right.  \label{eq17}%
\end{equation}
For the stability analysis of the ES control system (\ref{eq17}), several
methods are proposed in the existing literature including the classical
averaging approach (see \cite{ak03book,gkn12auto,kw00auto}), Lie brackets
approximation (see \cite{dsej13auto,lem22auto,sk17book}) and the recent time-delay
approach to averaging (see \cite{zf22auto,zfo22tac}). The classical averaging
approach usually resorts to the averaged system via the averaging theorem
\cite{khalil02book}. To be specific, treating $\tilde{\theta}(t)$ as a
\textquotedblleft freeze\textquotedblright \ constant in the averaging analysis
and defining $\omega_{i}=\frac{2\pi l_{i}}{\varepsilon},l_{i}\in \mathbf{N}_{+}$
($i\in \mathbf{I[}1,n\mathbf{]}$) satisfying $l_{i}\neq l_{j},i\neq j$, the
averaged system of (\ref{eq17}) can be derived as \cite{gkn12auto}
\begin{equation}
\dot{\tilde{\theta}}_{\mathrm{av}}(t)=KH\theta_{\mathrm{av}}(t), \label{eq10}%
\end{equation}
which is exponentially stable since $KH$ is Hurwitz.

The classical averaging approach leads to a qualitative analysis, namely, this
method cannot suggest quantitative lower bounds on the dither frequency that
guarantee the practical stability as well as the quantitative calculation of
the ultimate bound of seeking error. Recently, when the dimension $n=1,2$ in
(\ref{eq17}), motivated by \cite{fz20auto}, a constructive time-delay approach
for the stability analysis of gradient-based and bounded ES algorithms was
introduced in \cite{zf22auto,zfo22tac}. In the latter papers, the ES dynamics
was first converted into a time-delay neutral type model, and then the
L-K method was used to find sufficient practical stability
conditions in the form of LMIs.

Inspired by \cite{fz20auto,zf22auto}, we first apply the time-delay approach
to averaging of (\ref{eq17}). Integrating (\ref{eq17}) in $t\geq \varepsilon$
from $t-\varepsilon$ to $t,$ we get%
\begin{equation}
\left.
\begin{array}
[c]{l}%
\frac{1}{\varepsilon}%
{\textstyle \int \nolimits_{t-\varepsilon}^{t}}
\dot{\tilde{\theta}}(\tau)\mathrm{d}\tau=\frac{1}{\varepsilon}%
{\textstyle \int \nolimits_{t-\varepsilon}^{t}}
KM(\tau)Q^{\ast}\mathrm{d}\tau \\
+\frac{1}{2\varepsilon}%
{\textstyle \int \nolimits_{t-\varepsilon}^{t}}
KM(\tau)S^{\mathrm{T}}(\tau)HS(\tau)\mathrm{d}\tau \\
+\frac{1}{2\varepsilon}%
{\textstyle \int \nolimits_{t-\varepsilon}^{t}}
KM(\tau)\tilde{\theta}^{\mathrm{T}}(\tau)H\tilde{\theta}(\tau)\mathrm{d}\tau \\
+\frac{1}{\varepsilon}%
{\textstyle \int \nolimits_{t-\varepsilon}^{t}}
KM(\tau)S^{\mathrm{T}}(\tau)H\tilde{\theta}(\tau)\mathrm{d}\tau,\text{ }%
t\geq \varepsilon.
\end{array}
\right.  \label{eq47}%
\end{equation}
In the remainder of this paper, we define $x\pm y\triangleq x+y-y.$ For the
first term on the right-hand side of (\ref{eq47}), we have%
\begin{equation}
\left.
\begin{array}
[c]{l}%
\frac{1}{\varepsilon}%
{\textstyle \int \nolimits_{t-\varepsilon}^{t}}
KM(\tau)Q^{\ast}\mathrm{d}\tau \\
=\frac{1}{\varepsilon}Q^{\ast}K\operatorname{col}\big \{ \frac{2}{a_{i}}%
{\textstyle \int \nolimits_{t-\varepsilon}^{t}}
\sin \big(\frac{2\pi l_{i}}{\varepsilon}\tau \big)\mathrm{d}\tau \big \}_{i=1}%
^{n}\\
=0,
\end{array}
\right.  \label{eq48}%
\end{equation}
where we have used
\begin{equation}
\left.  \int \nolimits_{t-\varepsilon}^{t}\sin \big(\frac{2\pi l_{i}%
}{\varepsilon}\tau \big)\mathrm{d}\tau=0,\text{ }i\in \mathbf{I[}1,n\mathbf{]}%
.\right.  \label{eq48a}%
\end{equation}
For the second term on the right-hand side of (\ref{eq47}), we have%
\begin{equation}
\left.
\begin{array}
[c]{l}%
\frac{1}{2\varepsilon}%
{\textstyle \int \nolimits_{t-\varepsilon}^{t}}
KM(\tau)S^{\mathrm{T}}(\tau)HS(\tau)\mathrm{d}\tau \\
=\frac{1}{2\varepsilon}K%
{\textstyle \int \nolimits_{t-\varepsilon}^{t}}
{\textstyle \sum \limits_{i=1}^{n}}
\text{ }%
{\textstyle \sum \limits_{j=1}^{n}}
a_{i}a_{j}h_{ij}\sin \big(\frac{2\pi l_{i}}{\varepsilon}\tau \big)\sin
\big(\frac{2\pi l_{j}}{\varepsilon}\tau \big)M(\tau)\mathrm{d}\tau \\
=\frac{1}{\varepsilon}K\operatorname{col}\Big \{%
{\textstyle \sum \limits_{i=1}^{n}}
\text{ }%
{\textstyle \sum \limits_{j=1}^{n}}
\frac{a_{i}a_{j}h_{ij}}{a_{k}}%
{\textstyle \int \nolimits_{t-\varepsilon}^{t}}
\sin \big(\frac{2\pi l_{i}}{\varepsilon}\tau \big)\\
\times \sin \big(\frac{2\pi l_{j}}{\varepsilon}\tau \big)\sin \big(\frac{2\pi
l_{k}}{\varepsilon}\tau \big)\mathrm{d}\tau \Big \}_{k=1}^{n}\\
=0,
\end{array}
\right.  \label{eq49}%
\end{equation}
where we have utilized%
\[
\left.  \int \nolimits_{t-\varepsilon}^{t}\sin \Big(\frac{2\pi l_{i}%
}{\varepsilon}\tau \Big)\sin \Big(\frac{2\pi l_{j}}{\varepsilon}\tau
\Big)\sin \Big(\frac{2\pi l_{k}}{\varepsilon}\tau \Big)\mathrm{d}\tau=0.\right.
\]
For the third term on the right-hand side of (\ref{eq47}), we have%
\begin{equation}
\left.
\begin{array}
[c]{l}%
\frac{1}{2\varepsilon}%
{\textstyle \int \nolimits_{t-\varepsilon}^{t}}
KM(\tau)\tilde{\theta}^{\mathrm{T}}(\tau)H\tilde{\theta}(\tau)\mathrm{d}\tau \\
=\frac{1}{2\varepsilon}%
{\textstyle \int \nolimits_{t-\varepsilon}^{t}}
KM(\tau)[\tilde{\theta}^{\mathrm{T}}(\tau)H\tilde{\theta}(\tau)\pm
\tilde{\theta}^{\mathrm{T}}(t)H\tilde{\theta}(t)]\mathrm{d}\tau \\
=\frac{1}{2\varepsilon}\tilde{\theta}^{\mathrm{T}}(t)H\tilde{\theta}(t)K%
{\textstyle \int \nolimits_{t-\varepsilon}^{t}}
M(\tau)\mathrm{d}\tau \\
-\frac{1}{\varepsilon}%
{\textstyle \int \nolimits_{t-\varepsilon}^{t}}
{\textstyle \int \nolimits_{\tau}^{t}}
KM(\tau)\tilde{\theta}^{\mathrm{T}}(s)H\dot{\tilde{\theta}}(s)\mathrm{d}%
s\mathrm{d}\tau \\
=-\frac{1}{\varepsilon}%
{\textstyle \int \nolimits_{t-\varepsilon}^{t}}
{\textstyle \int \nolimits_{\tau}^{t}}
KM(\tau)\tilde{\theta}^{\mathrm{T}}(s)H\dot{\tilde{\theta}}(s)\mathrm{d}%
s\mathrm{d}\tau,
\end{array}
\right.  \label{eq50}%
\end{equation}
where we have employed $\int \nolimits_{t-\varepsilon}^{t}M(\tau)\mathrm{d}%
\tau=0$ via (\ref{eq48a}) and%
\[
\left.  \tilde{\theta}^{\mathrm{T}}(t)H\tilde{\theta}(t)-\tilde{\theta
}^{\mathrm{T}}(\tau)H\tilde{\theta}(\tau)=2%
{\textstyle \int \nolimits_{\tau}^{t}}
\tilde{\theta}^{\mathrm{T}}(s)H\dot{\tilde{\theta}}(s)\mathrm{d}s.\right.
\]
For the fourth term on the right-hand side of (\ref{eq47}), we have%
\begin{equation}
\left.
\begin{array}
[c]{l}%
\frac{1}{\varepsilon}%
{\textstyle \int \nolimits_{t-\varepsilon}^{t}}
KM(\tau)S^{\mathrm{T}}(\tau)H\tilde{\theta}(\tau)\mathrm{d}\tau \\
=\frac{1}{\varepsilon}%
{\textstyle \int \nolimits_{t-\varepsilon}^{t}}
KM(\tau)S^{\mathrm{T}}(\tau)H[\tilde{\theta}(\tau)\pm \tilde{\theta
}(t)]\mathrm{d}\tau \\
=\frac{1}{\varepsilon}K%
{\textstyle \int \nolimits_{t-\varepsilon}^{t}}
M(\tau)S^{\mathrm{T}}(\tau)\mathrm{d}\tau H\tilde{\theta}(t)\\
-\frac{1}{\varepsilon}%
{\textstyle \int \nolimits_{t-\varepsilon}^{t}}
{\textstyle \int \nolimits_{\tau}^{t}}
KM(\tau)S^{\mathrm{T}}(\tau)H\dot{\tilde{\theta}}(s)\mathrm{d}s\mathrm{d}%
\tau \\
=KH\tilde{\theta}(t)-\frac{1}{\varepsilon}%
{\textstyle \int \nolimits_{t-\varepsilon}^{t}}
{\textstyle \int \nolimits_{\tau}^{t}}
KM(\tau)S^{\mathrm{T}}(\tau)H\dot{\tilde{\theta}}(s)\mathrm{d}s\mathrm{d}\tau,
\end{array}
\right.  \label{eq51}%
\end{equation}
where we have utilized%
\[
\left.
{\textstyle \int \nolimits_{t-\varepsilon}^{t}}
M(\tau)S^{\mathrm{T}}(\tau)\mathrm{d}\tau=\varepsilon I_{n},\right.
\]
since%
\[
\left.  \int \nolimits_{t-\varepsilon}^{t}\frac{2a_{i}}{a_{j}}\sin
\Big(\frac{2\pi l_{i}}{\varepsilon}\tau \Big)\sin \Big(\frac{2\pi l_{j}%
}{\varepsilon}\tau \Big)\mathrm{d}\tau=\left \{
\begin{array}
[c]{cc}%
\varepsilon, & i=j,\\
0, & i\neq j.
\end{array}
\right.  \right.
\]
For the left-hand side of (\ref{eq47}), we have%
\begin{equation}
\left.  \frac{1}{\varepsilon}\int \nolimits_{t-\varepsilon}^{t}\dot
{\tilde{\theta}}(\tau)\mathrm{d}\tau=\frac{\mathrm{d}}{\mathrm{d}t}%
[\tilde{\theta}(t)-G(t)],\right.  \label{eq35}%
\end{equation}
where%
\begin{equation}
\left.  G(t)=\frac{1}{\varepsilon}\int \nolimits_{t-\varepsilon}^{t}%
(\tau-t+\varepsilon)\dot{\tilde{\theta}}(\tau)\mathrm{d}\tau.\right.
\label{eq37}%
\end{equation}
Finally, employing (\ref{eq48}), (\ref{eq49})-(\ref{eq35}), system (\ref{eq47}) can be
transformed to%
\begin{equation}
\left.  \frac{\mathrm{d}}{\mathrm{d}t}[\tilde{\theta}(t)-G(t)]=KH\tilde
{\theta}(t)-Y_{1}(t)-Y_{2}(t),\text{ }t\geq \varepsilon,\right.  \label{eq52b}%
\end{equation}
where%
\begin{equation}
\left.
\begin{array}
[c]{l}%
Y_{1}(t)=\frac{1}{\varepsilon}%
{\textstyle \int \nolimits_{t-\varepsilon}^{t}}
{\textstyle \int \nolimits_{\tau}^{t}}
KM(\tau)\tilde{\theta}^{\mathrm{T}}(s)H\dot{\tilde{\theta}}(s)\mathrm{d}%
s\mathrm{d}\tau,\\
Y_{2}(t)=\frac{1}{\varepsilon}%
{\textstyle \int \nolimits_{t-\varepsilon}^{t}}
{\textstyle \int \nolimits_{\tau}^{t}}
KM(\tau)S^{\mathrm{T}}(\tau)H\dot{\tilde{\theta}}(s)\mathrm{d}s\mathrm{d}\tau,
\end{array}
\right.  \label{eq56}%
\end{equation}
whereas $\dot{\tilde{\theta}}(s)$ is defined by the right-hand side of
(\ref{eq17}). Clearly, the solution $\tilde{\theta}(t)$ of system (\ref{eq17})
is also a solution of system (\ref{eq52b}). Thus, the practical stability of the original
non-delayed system (\ref{eq17}) can be guaranteed by the practical stability
of the time-delay system (\ref{eq52b}), which is a neutral type system with
the state $\tilde{\theta}$, as derived in \cite{zf22auto} for 2D maps.

In this paper, for simplifying the stability analysis, we further set
\begin{equation}
\left.  z(t)=\tilde{\theta}(t)-G(t).\right.  \label{eq52a}%
\end{equation}
Then system (\ref{eq52b}) can be rewritten as%
\begin{equation}
\left.  \dot{z}(t)=KHz(t)+KHG(t)-Y_{1}(t)-Y_{2}(t),\text{ }t\geq
\varepsilon.\right.  \label{eq52}%
\end{equation}

Comparatively to the averaged system (\ref{eq10}), system (\ref{eq52}) has the
additional terms $G(t),$ $Y_{1}(t)$ and $Y_{2}(t)$ that are of the order of
\textrm{O}$(\varepsilon)$ provided $\tilde{\theta}(s)$ and $\dot{\tilde
{\theta}}(s)$ (and thus $z(t)$) are of the order of
\textrm{O}$(1)$. Hence, for small $\varepsilon>0$
system (\ref{eq52}) can be regarded as a perturbation of system (\ref{eq10}).

Differently from \cite{zf22auto}, we will analyze (\ref{eq52}) as ODE w.r.t.
$z$ (and not as neutral type w.r.t. $\tilde{\theta}$) with delayed
disturbance-like \textrm{O}$(\varepsilon)$-terms $G,Y_{1},Y_{2}$ that depend
on the solutions of (\ref{eq17}). The resulting bound on $|z|$ will lead to
the bound on $\tilde{\theta}:|\tilde{\theta}|\leq|z|+|G|.$ The bound on $z$
will be found by utilizing the variation of constants formula compared to L-K
method employed in \cite{zf22auto}. This will greatly simplify the stability
analysis process along with the stability conditions, and improve the
quantitative bounds as well as the permissible range of the extremum value
$Q^{\ast}$ and the Hessian matrix $H$ in the numerical examples.

\begin{theorem}
\label{theorem1}Let \textbf{A1-A3} be satisfied. Consider the closed-loop
system (\ref{eq17}) with the initial condition $|\tilde{\theta}(0)|\leq
\sigma_{0}.$ Given tuning parameters $k_{i},$ $a_{i}$ ($i=\mathbf{I[}%
1,n\mathbf{]}$) and $\delta,$ let matrix $P$ ($I_{n}\leq P\leq pI_{n})$ with a
scalar $p\geq1$ and scalar $\zeta>0$ satisfy the following LMI:%
\begin{equation}
\left.
\begin{array}
[c]{l}%
\Phi_{1}=\left[
\begin{array}
[c]{cc}%
\Phi_{11} & PK\\
\ast & -\zeta I_{n}%
\end{array}
\right]  <0,\\
\Phi_{11}=\bar{H}^{\mathrm{T}}K^{\mathrm{T}}P+PK\bar{H}+2\delta P+\zeta
\kappa^{2}I_{n}.
\end{array}
\right.  \label{eq24a}%
\end{equation}
Given $\sigma>\sigma_{0}>0,$ let there exits $\varepsilon^{\ast}>0$ that
satisfy
\begin{equation}
\left.
\begin{array}
[c]{l}%
\Phi_{2}=p\left(  \sigma_{0}+\frac{\varepsilon^{\ast}\Delta \lbrack2(\Delta
_{1}+\Delta_{2}+\Delta_{3})+3\delta]}{2\delta}\right)  ^{2}\\
-\left(  \sigma-\frac{\varepsilon^{\ast}\Delta}{2}\right)  ^{2}<0,
\end{array}
\right.  \label{eq24}%
\end{equation}
where%
\begin{equation}
\left.
\begin{array}
[c]{l}%
\Delta=\left[  Q_{M}^{\ast}+\frac{H_{M}}{2}\left(  \sigma+\sqrt{%
{\textstyle \sum \nolimits_{i=1}^{n}}
a_{i}^{2}}\right)  ^{2}\right]  \sqrt{%
{\textstyle \sum \nolimits_{i=1}^{n}}
\frac{4k_{i}^{2}}{a_{i}^{2}}},\\
\Delta_{1}=\frac{H_{M}\max_{i\in \mathbf{I[}1,n\mathbf{]}}\left \vert
k_{i}\right \vert }{2},\text{ }\Delta_{2}=\frac{\sigma H_{M}}{2}\sqrt{%
{\textstyle \sum \nolimits_{i=1}^{n}}
\frac{4k_{i}^{2}}{a_{i}^{2}}},\\
\Delta_{3}=\frac{H_{M}}{2}\sqrt{%
{\textstyle \sum \nolimits_{i=1}^{n}}
\frac{4k_{i}^{2}}{a_{i}^{2}}}\sqrt{%
{\textstyle \sum \nolimits_{i=1}^{n}}
a_{i}^{2}}.
\end{array}
\right.  \label{eq70}%
\end{equation}
Then for all $\varepsilon \in(0,\varepsilon^{\ast}],$ the solution of the
estimation error system (\ref{eq17}) satisfies%
\begin{equation}
\left.
\begin{array}
[c]{l}%
|\tilde{\theta}(t)|<|\tilde{\theta}(0)|+\varepsilon \Delta<\sigma,\text{ }%
t\in \lbrack0,\varepsilon],\\
|\tilde{\theta}(t)|<\sqrt{p}\mathrm{e}^{-\delta(t-\varepsilon)}\left(
|\tilde{\theta}(0)|+\frac{3\varepsilon \Delta}{2}\right)  \\
+\frac{\varepsilon \Delta \lbrack2(\Delta_{1}+\Delta_{2}+\Delta_{3})\sqrt
{p}+\delta]}{2\delta}<\sigma,\text{ }t\geq \varepsilon.
\end{array}
\right.  \label{eq60}%
\end{equation}
Moreover, for all $\varepsilon \in(0,\varepsilon^{\ast}]$ and all initial
conditions $|\tilde{\theta}(0)|\leq \sigma_{0},$ the ball%
\begin{equation}
\left.  \Theta=\left \{  \tilde{\theta}\in \mathbf{R:}|\tilde{\theta}%
(t)|<\frac{\varepsilon \Delta \lbrack2(\Delta_{1}+\Delta_{2}+\Delta_{3})\sqrt
{p}+\delta]}{2\delta}\right \}  \right.  \label{eq69}%
\end{equation}
is exponential attractive with a decay rate $\delta.$
\end{theorem}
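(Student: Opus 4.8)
The plan is to treat the transformed system (\ref{eq52}) as a \emph{linear ODE in} $z$ forced by the delayed $\mathrm{O}(\varepsilon)$ disturbances $G,Y_1,Y_2$, to bound $|z|$ through the variation of constants formula, and then to recover the error via $|\tilde\theta|\le|z|+|G|$ from (\ref{eq52a}). The whole argument rests on an a priori bound $|\tilde\theta(t)|\le\sigma$ on the relevant windows $[t-\varepsilon,t]$; I would install this hypothesis, derive all estimates under it, and close it at the end by a continuity (bootstrapping) argument, which I expect to be the only genuinely delicate point.

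First I would establish the pointwise bound on the derivative. Completing the square shows the bracket in (\ref{eq17}) equals $Q^{\ast}+\tfrac12(S+\tilde\theta)^{\mathrm T}H(S+\tilde\theta)$, so whenever $|\tilde\theta|\le\sigma$ one gets $|\dot{\tilde\theta}|\le\Delta$ with $\Delta$ as in (\ref{eq70}), using $|KM|\le\sqrt{\sum 4k_i^2/a_i^2}$, $|S|\le\sqrt{\sum a_i^2}$ and $\|H\|\le H_M$. Substituting this into (\ref{eq37}) and (\ref{eq56}) and estimating the integrals by their weights then yields $|G|\le\varepsilon\Delta/2$, $|KHG|\le\varepsilon\Delta\,\Delta_1$, $|Y_1|\le\varepsilon\Delta\,\Delta_2$ and $|Y_2|\le\varepsilon\Delta\,\Delta_3$; these identifications are exactly the purpose of $\Delta_1,\Delta_2,\Delta_3$ in (\ref{eq70}), so that $|KHG-Y_1-Y_2|\le\varepsilon\Delta(\Delta_1+\Delta_2+\Delta_3)$.

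Next I would extract the exponential transition bound from the LMI. Applying the Schur complement to $\Phi_1<0$ in (\ref{eq24a}) gives $\bar H^{\mathrm T}K^{\mathrm T}P+PK\bar H+2\delta P+\zeta\kappa^2 I_n+\zeta^{-1}PKK^{\mathrm T}P<0$, and absorbing the uncertain part by Young's inequality, $PK\Delta H+\Delta H^{\mathrm T}K^{\mathrm T}P\le\zeta^{-1}PKK^{\mathrm T}P+\zeta\kappa^2 I_n$ with $\|\Delta H\|\le\kappa$, yields $H^{\mathrm T}K^{\mathrm T}P+PKH+2\delta P<0$ for every admissible $H$. Since $I_n\le P\le pI_n$, this gives $|\mathrm{e}^{KHt}|\le\sqrt{p}\,\mathrm{e}^{-\delta t}$. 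On $[0,\varepsilon]$ a direct integration of (\ref{eq17}) with $|\dot{\tilde\theta}|\le\Delta$ produces the first line of (\ref{eq60}). For $t\ge\varepsilon$, variation of constants on (\ref{eq52}) combined with the transition bound and the $\mathrm{O}(\varepsilon)$ estimates gives
\[
|z(t)|\le \sqrt{p}\,\mathrm{e}^{-\delta(t-\varepsilon)}|z(\varepsilon)|
+\frac{\sqrt{p}\,\varepsilon\Delta(\Delta_1+\Delta_2+\Delta_3)}{\delta},
\]
and with $|z(\varepsilon)|\le|\tilde\theta(0)|+\tfrac32\varepsilon\Delta$ followed by adding $|G|\le\varepsilon\Delta/2$, the second line of (\ref{eq60}) follows after collecting the terms over the common denominator $2\delta$.

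It remains to prove the strict inequalities $<\sigma$ and the attractivity. Evaluating the bound of (\ref{eq60}) at $t=\varepsilon$ and using $|\tilde\theta(0)|\le\sigma_0$ gives $\sqrt{p}\big(\sigma_0+\tfrac{\varepsilon\Delta[2(\Delta_1+\Delta_2+\Delta_3)+3\delta]}{2\delta}\big)+\tfrac{\varepsilon\Delta}{2}$, which is exactly what $\Phi_2<0$ in (\ref{eq24}) forces below $\sigma$; monotonicity of the decaying exponential then propagates $<\sigma$ to all $t\ge\varepsilon$, and letting $t\to\infty$ isolates the ultimate bound, yielding exponential attractivity of $\Theta$ in (\ref{eq69}) with rate $\delta$. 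The \textbf{main obstacle} is the bootstrapping needed to legitimize the standing hypothesis: every estimate above presupposes $|\tilde\theta|\le\sigma$, whereas the conclusion is $|\tilde\theta|<\sigma$. I would set $t^{\ast}=\sup\{t:\ |\tilde\theta(\tau)|<\sigma\ \text{for all}\ \tau\le t\}$, note $t^{\ast}>0$ by continuity, run the above estimates on $[0,t^{\ast}]$, and use the \emph{strict} margin provided by $\Phi_2<0$ to conclude $|\tilde\theta(t^{\ast})|<\sigma$; this contradicts maximality unless $t^{\ast}=\infty$, which is what closes the argument.
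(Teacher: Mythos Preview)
Your proposal is correct and follows essentially the same route as the paper: bound $|\dot{\tilde\theta}|\le\Delta$ under the a priori hypothesis $|\tilde\theta|<\sigma$, deduce the $\mathrm{O}(\varepsilon)$ bounds on $G,Y_1,Y_2$, apply variation of constants to (\ref{eq52}) with the transition estimate $\|\mathrm{e}^{KHt}\|\le\sqrt{p}\,\mathrm{e}^{-\delta t}$ coming from the LMI (\ref{eq24a}), recover $|\tilde\theta|\le|z|+|G|$, and close the hypothesis by a continuation/contradiction argument using the strict margin in $\Phi_2<0$. The only cosmetic difference is that you extract the transition bound via Schur complement plus Young's inequality, whereas the paper writes it as an $S$-procedure with the augmented vector $\xi=[z;\Delta Hz]$; these are equivalent formulations of the same estimate.
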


\begin{proof}
See Appendix A1.
\end{proof}

In the following, we make some explanations on Theorem \ref{theorem1}.

\begin{remark}
\label{remark2}Given any $\sigma_{0}$ and $\sigma^{2}>p\sigma_{0}^{2},$
inequality ($\Phi_{2}<0$ in (\ref{eq24})) is always feasible for small enough
$\varepsilon^{\ast}.$ Therefore, the result is semi-global. For $\Phi_{1}<0$
in (\ref{eq24a}), since $K\bar{H}$ is Hurwitz, there exists a $n\times n$
matrix $P>0$ such that for small enough $\delta>0,$ the following inequality
holds: $\Psi=\bar{H}^{\mathrm{T}}K^{\mathrm{T}}P+PK\bar{H}+2\delta P<0.$ We
choose $\zeta=1/\kappa.$ Applying the Schur complement to $\Phi_{1}<0,$ we
have%
\[
\left.  \Psi+\kappa \left(  I_{n}+PKKP\right)  <0,\right.
\]
which always holds for small enough $\kappa>0$ since $\Psi<0.$ For $P>0,$
there exist positive scalars $p_{1}$ and $p_{2}$ such that
\begin{equation}
\left.  p_{1}I_{n}\leq P\leq p_{2}I_{n}.\right.  \label{eq68}%
\end{equation}
If $p_{1}\neq1,$ we can rewrite (\ref{eq68}) as%
\[
\left.  I_{n}\leq \frac{1}{p_{1}}P\leq \frac{p_{2}}{p_{1}}I_{n},\right.
\]
which is in the form of $I_{n}\leq P\leq pI_{n}$ by setting $P=P/p_{1}$ and
$p=p_{2}/p_{1}.$ Furthermore, $\Phi_{i}<0$ ($i=1,2$) hold with the modified
\{$P,p$\} as well as the bound in (\ref{eq69}). The similar argument for the
LMI feasibility is applicable in Theorem \ref{theorem2}.
\end{remark}

\begin{remark}
\label{remark4}We give a brief discussion about the effect of free parameters on the
performance of ES system. For simplicity, let $K=kI_{n}$ with $k<0$ being a
given scalar. Then from (\ref{eq70}) we know that $\Delta$ and $\Delta_{i}$
($i\in \mathbf{I[}1,3\mathbf{]}$) are of the order of \textrm{O}$(\left \vert
k\right \vert )$ as well as the decay rate $\delta$ since $\delta=\left \vert k\right \vert \lambda_{\min}(H).$ Thus%
\[
\left.  \vartheta_{1}\triangleq \frac{2\sqrt{p}\Delta(\Delta_{1}+\Delta
_{2}+\Delta_{3})+\left(  3\sqrt{p}+1\right)  \Delta \delta}%
{2\delta}\right.
\]
is of the order of \textrm{O}$(\left \vert k\right \vert ).$ Note from
(\ref{eq24}) (which is equivalent to (\ref{eq24c})) that%
\[
\varepsilon^{\ast}<\frac{1}{\vartheta_{1}}\left(  \sigma-\sqrt{p}\sigma
_{0}\right)  ,
\]
which implies that for given $\sigma>\sigma_{0}>0,$ $\varepsilon^{\ast}$ is of
the order of \textrm{O}$(1/\left \vert k\right \vert ).$ Therefore, the decay rate $\delta$ increases as $\left \vert
k\right \vert $ increases, while $\varepsilon^{\ast}$ decreases as $\left \vert
k\right \vert $ increases. So we can adjust the gain $K=kI_{n}$ to balance the decay rate $\delta$ and $\varepsilon^{\ast}.$
In addition, we let%
\[
\left.  \vartheta_{2}\triangleq \frac{\Delta \lbrack2(\Delta_{1}+\Delta
_{2}+\Delta_{3})\sqrt{p}+\delta]}{2\delta}.\right.
\]
Then the ball in (\ref{eq69}) can be rewritten as%
\begin{equation}
\left.  \Theta=\left \{  \tilde{\theta}\in \mathbf{R:}|\tilde{\theta
}(t)|<\varepsilon \vartheta_{2}\right \}  .\right.  \label{eq69a}%
\end{equation}
Note from (\ref{eq70}) that $\vartheta_{2}$ is an increasing function of
$\sigma,$ thus, for given $\sigma_{0},$ $\delta,$ $\varepsilon,$ $a_{i}$ and
$k_{i}$ ($i\in \mathbf{I[}1,n\mathbf{]}$), we can solve the inequality
(\ref{eq24}) to find the smallest $\sigma,$ and then substitute it into
(\ref{eq69a}) to get the bound. Moreover, if $\varepsilon \vartheta_{2}%
<\sigma_{0}-\beta$ with some $\beta \in (0, \sigma_{0})$, we can reset $\sigma
_{0}=\varepsilon \vartheta_{2}+\beta$ and repeat the above process to obtain
a smaller ultimate bound (UB). Obviously, the lower bound of UB in theory is $\varepsilon
\vartheta_{2}$ with $\sigma=0.$
\end{remark}

\begin{remark}
\label{remark1}Compared with the results in \cite{zf22auto}, Theorem
\ref{theorem1} presents much simpler proof and LMI-based conditions, which
allow us to get larger decay rate and period of the dither signal. Moreover,
it is observed from (\ref{eq69}) that the ultimate bound on the estimation
error is of the order of $\mathrm{O}(\varepsilon)$ provided that $a_{i},k_{i}$
($i\in \mathbf{I[}1,n\mathbf{]}$) are of the order of $\mathrm{O}(1)$ leading
to $\delta$ of the order of $\mathrm{O}(1).$ This is smaller than
$\mathrm{O}(\sqrt{\varepsilon})$ achieved in \cite{zf22auto}. In addition, due
to the complexity of the LMIs in the vertices when the Hessian $H$ is not
known, the work \cite{zf22auto} did not go into details to discuss the
uncertainty case. As a comparison, by using the established time-delay
approach, we can easily solve the uncertainty case.
\end{remark}

\begin{remark}
We have taken the backward averaged method (\textquotedblleft
backward\textquotedblright \ refers to the interval $[t-\varepsilon,t]$ rather
than $[t,t+\varepsilon]$) to derive the ODE system as shown in
(\ref{eq52}). If the forward averaged method is adopted, namely, integrating
(\ref{eq17}) in $t\geq0$ from $t$ to $t+\varepsilon,$ we will obtain the
following closed-loop vector system:%
\begin{equation}
\left.  \dot{z}(t)=KHz(t)+KHG(t)+Y_{1}(t)+Y_{2}(t),\text{ }t\geq0,\right.
\label{eq33}%
\end{equation}
where%
\[
\left.
\begin{array}
[c]{l}%
z(t)=\tilde{\theta}(t)-G(t),\\
G(t)=\frac{1}{\varepsilon}%
{\textstyle \int \nolimits_{t}^{t+\varepsilon}}
(\tau-t-\varepsilon)\dot{\tilde{\theta}}(\tau)\mathrm{d}\tau,\\
Y_{1}(t)=\frac{1}{\varepsilon}%
{\textstyle \int \nolimits_{t}^{t+\varepsilon}}
{\textstyle \int \nolimits_{t}^{\tau}}
KM(\tau)\tilde{\theta}^{\mathrm{T}}(s)H\dot{\tilde{\theta}}(s)\mathrm{d}%
s\mathrm{d}\tau,\\
Y_{2}(t)=\frac{1}{\varepsilon}%
{\textstyle \int \nolimits_{t}^{t+\varepsilon}}
{\textstyle \int \nolimits_{t}^{\tau}}
KM(\tau)S^{\mathrm{T}}(\tau)H\dot{\tilde{\theta}}(s)\mathrm{d}s\mathrm{d}\tau.
\end{array}
\right.
\]
System (\ref{eq33}) is of advanced type as it depends on the future values of
$\tilde{\theta}(s)$ and $\dot{\tilde{\theta}}(s),$ $s\in \lbrack
t,t+\varepsilon].$ Note that system (\ref{eq33}) is available from $t\geq0$
rather than $t\geq \varepsilon$ for (\ref{eq52}), it seems that by using
arguments of Theorem \ref{theorem1} for (\ref{eq33}), we will get a better
result. However, it is not well-posed with the initial conditions at $t=0$ and seems to be impossible to prove the assumption
$|\tilde{\theta}(t)|<\sigma,$ $\forall t\geq0$ because of the advanced
information. This is the reason that we take the backward integration instead
of the forward one here.
\end{remark}

Next we consider a special case with the Hessian $H$ being diagonal, namely,
$H=\mathrm{diag}\{h_{1},h_{2},\ldots,h_{n}\}$ with $h_{i}>0,i\in
\mathbf{I[}1,n\mathbf{].}$ We also assume that $H$ is unknown, but satisfies
(\ref{eq71}). In this case, instead of utilizing the Lyapunov method to find
the upper bound of the fundamental matrix $\mathrm{e}^{KHt},$ we can directly
compute that%
\[
\left.  \left \Vert \mathrm{e}^{KHt}\right \Vert \leq \mathrm{e}^{-H_{m}%
\min_{i\in \mathbf{I[}1,n\mathbf{]}}\left \vert k_{i}\right \vert t}%
\triangleq \mathrm{e}^{-\delta t},\text{ }\forall t\geq0.\right.
\]
This can lead to a simpler analysis and more concise result as shown in the
following corollary.

\begin{corollary}
\label{corollary1}Let \textbf{A1-A2} be satisfied and the diagonal Hessian $H$
be unknown but satisfy (\ref{eq71}). Consider the closed-loop system
(\ref{eq17}) with the initial condition $|\tilde{\theta}(0)|\leq \sigma_{0}.$
Given tuning parameters $k_{i},$ $a_{i}$ ($i=\mathbf{I[}1,n\mathbf{]}$) and
$\sigma>\sigma_{0}>0,$ let there exits $\varepsilon^{\ast}>0$ that satisfy
\[
\left.  \Phi=\sigma_{0}+\frac{\varepsilon^{\ast}\Delta \lbrack(\Delta
_{1}+\Delta_{2}+\Delta_{3})+2\delta]}{\delta}<\sigma,\right.
\]
where $\Delta$ and $\Delta_{i}$ $(i=\mathbf{I[}1,3\mathbf{]})$ are given by
(\ref{eq70}). Then for all $\varepsilon \in(0,\varepsilon^{\ast}],$ the
solution of the estimation error system (\ref{eq17}) satisfies%
\[
\left.
\begin{array}
[c]{l}%
|\tilde{\theta}(t)|<|\tilde{\theta}(0)|+\varepsilon \Delta<\sigma,\text{ }%
t\in \lbrack0,\varepsilon],\\
|\tilde{\theta}(t)|<\mathrm{e}^{-\delta(t-\varepsilon)}\left(  |\tilde{\theta
}(0)|+\frac{3\varepsilon \Delta}{2}\right)  \\
+\frac{\varepsilon \Delta \lbrack2(\Delta_{1}+\Delta_{2}+\Delta_{3})+\delta
]}{2\delta}<\sigma,\text{ }t\geq \varepsilon.
\end{array}
\right.
\]
Moreover, for all $\varepsilon \in(0,\varepsilon^{\ast}]$ and all initial
conditions $ \vert \tilde{\theta}(0) \vert \leq \sigma_{0},$ the ball%
\[
\left.  \Theta=\left \{  \tilde{\theta}\in \mathbf{R:}|\tilde{\theta}%
(t)|<\frac{\varepsilon \Delta \lbrack2(\Delta_{1}+\Delta_{2}+\Delta_{3}%
)+\delta]}{2\delta}\right \}  \right.
\]
is exponential attractive with a decay rate $\delta=H_{m}\min_{i\in
\mathbf{I[}1,n\mathbf{]}}\left \vert k_{i}\right \vert .$
\end{corollary}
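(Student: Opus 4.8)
The plan is to mirror the argument behind Theorem~\ref{theorem1}, exploiting the fact that for diagonal $H$ (and diagonal $K$) the matrix $KH$ is diagonal with entries $k_ih_i<0$, so the fundamental matrix admits the explicit bound $\|\mathrm{e}^{KHt}\|\le \mathrm{e}^{-\delta t}$ stated just above, with $\delta=H_m\min_i|k_i|$. This replaces the Lyapunov/LMI estimate (\ref{eq24a}) and removes the matrix $P$ together with the factor $\sqrt{p}$ from every bound; that is precisely why the claimed estimates are the $p=1$ specializations of (\ref{eq60}). Consequently I would not re-derive the whole structure from scratch but rather re-run the two-window estimate of Theorem~\ref{theorem1} with the direct exponential bound in place of the Lyapunov inequality.

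First I would treat the short window $t\in[0,\varepsilon]$ on its own, since the transformed system (\ref{eq52}) holds only for $t\ge\varepsilon$. On this interval I integrate (\ref{eq17}) directly. The elementary estimate is that, as long as $|\tilde{\theta}(\tau)|<\sigma$, the right-hand side of (\ref{eq17}) is bounded by $\Delta$: using $|KM|\le\sqrt{\sum_i 4k_i^2/a_i^2}$, $|S|\le\sqrt{\sum_i a_i^2}$, $\|H\|\le H_M$ and collecting $\tfrac12 H_M(|S|+|\tilde{\theta}|)^2$ one gets $|\dot{\tilde{\theta}}(t)|\le\Delta$, hence $|\tilde{\theta}(t)|\le|\tilde{\theta}(0)|+\varepsilon\Delta$.

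Second, for $t\ge\varepsilon$ I would apply the variation of constants formula to (\ref{eq52}) in the variable $z=\tilde{\theta}-G$:
\[
z(t)=\mathrm{e}^{KH(t-\varepsilon)}z(\varepsilon)+\int_{\varepsilon}^{t}\mathrm{e}^{KH(t-s)}\big[KHG(s)-Y_{1}(s)-Y_{2}(s)\big]\mathrm{d}s.
\]
The core is to bound the $\mathrm{O}(\varepsilon)$ disturbances while $|\tilde{\theta}|<\sigma$ and $|\dot{\tilde{\theta}}|\le\Delta$ on the integration windows: the triangular weight in (\ref{eq37}) gives $|G|\le\tfrac{\varepsilon\Delta}{2}$; the double integrals in (\ref{eq56}), whose planar domains have area $\varepsilon^2/2$, give $|Y_1|\le\varepsilon\Delta\Delta_2$ and $|Y_2|\le\varepsilon\Delta\Delta_3$; and $|KHG|\le\|KH\|\,|G|\le H_M\max_i|k_i|\cdot\tfrac{\varepsilon\Delta}{2}=\varepsilon\Delta\Delta_1$. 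Substituting $\|\mathrm{e}^{KH\cdot}\|\le\mathrm{e}^{-\delta\cdot}$, using $\int_{\varepsilon}^{t}\mathrm{e}^{-\delta(t-s)}\mathrm{d}s\le 1/\delta$ and $|z(\varepsilon)|\le|\tilde{\theta}(0)|+\tfrac{3\varepsilon\Delta}{2}$, bounds $|z(t)|$; adding $|G(t)|\le\tfrac{\varepsilon\Delta}{2}$ via $|\tilde{\theta}|\le|z|+|G|$ and regrouping reproduces exactly the second asserted estimate, from which the exponential attractivity of the ball $\Theta$ with rate $\delta$ is immediate.

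The main obstacle, to be handled with care rather than by routine computation, is that all of the above bounds presuppose $|\tilde{\theta}|<\sigma$ on the relevant interval, which is what we are proving. I would close this loop by a continuation (bootstrap) argument: set $T=\sup\{t:|\tilde{\theta}(\tau)|<\sigma\ \forall\,\tau\le t\}$, which is positive since $|\tilde{\theta}(0)|\le\sigma_0<\sigma$. On $[0,T)$ both derived estimates hold, and the hypothesis $\Phi<\sigma$ is calibrated to make them \emph{strictly} subcritical: for $t\in[0,\varepsilon]$ one has $\sigma_0+\varepsilon\Delta\le\Phi<\sigma$ (using $\tfrac{(\Delta_1+\Delta_2+\Delta_3)+2\delta}{\delta}\ge2\ge1$ and $\varepsilon\le\varepsilon^{\ast}$), while for $t\ge\varepsilon$ the worst case $t=\varepsilon$ yields $\sigma_0+\tfrac{3\varepsilon\Delta}{2}+\tfrac{\varepsilon\Delta[2(\Delta_1+\Delta_2+\Delta_3)+\delta]}{2\delta}$, which regroups precisely into $\sigma_0+\tfrac{\varepsilon\Delta[(\Delta_1+\Delta_2+\Delta_3)+2\delta]}{\delta}\le\Phi<\sigma$. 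Strictness contradicts the equality $|\tilde{\theta}(T)|=\sigma$ forced by continuity at a finite $T$, so $T=\infty$ and the estimates are global in time, completing the proof.
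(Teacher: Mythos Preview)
Your proposal is correct and follows essentially the same route as the paper: Corollary~\ref{corollary1} is exactly the $p=1$ specialization of the Theorem~\ref{theorem1} argument, obtained by replacing the Lyapunov bound (\ref{eq29}) with the direct estimate $\|\mathrm{e}^{KHt}\|\le\mathrm{e}^{-\delta t}$ for diagonal $KH$, and your disturbance bounds on $G,Y_1,Y_2$ together with the variation-of-constants computation reproduce (\ref{eq21})--(\ref{eq25}) verbatim. Your explicit continuation argument closing the a~priori hypothesis $|\tilde\theta|<\sigma$ is in fact more detailed than the paper, which merely invokes ``contradiction-based arguments in \cite{zf22auto}'' at the end of Appendix~A1.
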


\begin{remark}
\label{remark3}When $n=1,$ the results in Corollary \ref{corollary1} can be
further improved as follows. We note that%
\[
\left.
\begin{array}
[c]{l}%
|G(t)|<\frac{\varepsilon \Delta}{2},\text{ }|KHG(t)|<\frac{\varepsilon
\Delta \cdot \left \vert KH\right \vert }{2},\text{ }|Y_{1}(t)|<\frac
{\varepsilon \Delta \cdot \sigma \left \vert KH\right \vert }{\left \vert
a\right \vert },\\
|Y_{2}(t)|<\varepsilon \Delta \cdot \left \vert KH\right \vert ,\text{ }%
\mathrm{e}^{KHt}\leq \mathrm{e}^{-\left \vert KH_{m}\right \vert t}%
\triangleq \mathrm{e}^{-\delta t},
\end{array}
\right.
\]
where%
\[
\Delta=\left[  Q_{M}^{\ast}+\frac{H_{M}}{2}\left(  \sigma+\left \vert
a\right \vert \right)  ^{2}\right]  \frac{2\left \vert K\right \vert }{\left \vert
a\right \vert }.
\]
Then via (\ref{eq22}), we have%
\[
\left.
\begin{array}
[c]{l}%
|z(t)|<\mathrm{e}^{-\delta \left(  t-\varepsilon \right)  }|z(\varepsilon)|\\
+\frac{\varepsilon \Delta(3\left \vert a\right \vert +2\sigma)\left \vert
KH\right \vert }{2\left \vert a\right \vert }%
{\textstyle \int \nolimits_{\varepsilon}^{t}}
\mathrm{e}^{-\left \vert KH\right \vert (t-s)}\mathrm{d}s\\
=\mathrm{e}^{-\delta \left(  t-\varepsilon \right)  }|z(\varepsilon)|\\
+\frac{\varepsilon \Delta(3\left \vert a\right \vert +2\sigma)}{2\left \vert
a\right \vert }\left(  1-\mathrm{e}^{-\left \vert KH\right \vert (t-\varepsilon
)}\right)  \\
\leq \mathrm{e}^{-\delta \left(  t-\varepsilon \right)  }|z(\varepsilon
)|+\frac{\varepsilon \Delta(3\left \vert a\right \vert +2\sigma)}{2\left \vert
a\right \vert }.
\end{array}
\right.
\]
It follows that%
\[
\left.
\begin{array}
[c]{l}%
\left \vert \tilde{\theta}(t)\right \vert <\mathrm{e}^{-\delta(t-\varepsilon
)}\left(  |\tilde{\theta}(0)|+\frac{3\varepsilon \Delta}{2}\right)
+\frac{\varepsilon \Delta(3\left \vert a\right \vert +2\sigma)}{2\left \vert
a\right \vert }+\frac{\varepsilon \Delta}{2}\\
=\mathrm{e}^{-\delta(t-\varepsilon)}\left(  |\tilde{\theta}(0)|+\frac
{3\varepsilon \Delta}{2}\right)  +\frac{\varepsilon \Delta(2\left \vert
a\right \vert +\sigma)}{\left \vert a\right \vert },\text{ }t\geq \varepsilon,
\end{array}
\right.
\]
which implies%
\[
\left.  \left \vert \tilde{\theta}(t)\right \vert <\mathrm{e}^{-\delta
(t-\varepsilon)}\left(  |\tilde{\theta}(0)|+\frac{3\varepsilon \Delta}%
{2}\right)  +\frac{\varepsilon \Delta(2\left \vert a\right \vert +\sigma
)}{\left \vert a\right \vert }<\sigma,\right.
\]
if%
\[
\left.  \Phi=\sigma_{0}+\frac{\varepsilon^{\ast}\Delta(7\left \vert
a\right \vert +2\sigma)}{2\left \vert a\right \vert }<\sigma,\text{ }%
\forall \varepsilon \in(0,\varepsilon^{\ast}].\right.
\]
In this case, the ultimate bound is given by%
\[
\left.  \Theta=\left \{  \tilde{\theta}\in \mathbf{R:}|\tilde{\theta}%
(t)|<\frac{\varepsilon \Delta(2\left \vert a\right \vert +\sigma)}{\left \vert
a\right \vert }\right \}  \right.
\]
with a decay rate $\delta=\left \vert KH_{m}\right \vert .$
\end{remark}

\subsection{Examples}

\subsubsection{Scalar systems}

Consider the single-input map \cite{zf22auto}%
\[
Q(\theta(t))=Q^{\ast}+\frac{H}{2}\theta^{2}(t)
\]
with $Q^{\ast}=0$ and $H=2.$ Note that%
\[
\left.
\begin{array}
[c]{ll}%
\dot{\hat{\theta}}(t)=Ka\sin(\omega t)y(t) & \text{in \cite{zf22auto},}\\
\dot{\hat{\theta}}(t)=\frac{2K}{a}\sin(\omega t)y(t) & \text{in Remark
\ref{remark3},}%
\end{array}
\right.
\]
then for a fair comparison, we select the tuning parameters of the
gradient-based ES as%
\[
\left.
\begin{array}
[c]{ll}%
a=0.1,\text{ }K=-1.3 & \text{in \cite{zf22auto},}\\
a=0.1,\text{ }K=-1.3\cdot \frac{0.1^{2}}{2}=-6.5\mathrm{e}^{-3} & \text{in
Remark \ref{remark3}.}%
\end{array}
\right.
\]
If $Q^{\ast}$ and $H$ are unknown, but satisfy \textbf{A2} and (\ref{eq71}) we
consider%
\[
\left.
\begin{array}
[c]{c}%
Q_{M}^{\ast}=0.1,\text{ }H_{m}=1.9,\text{ }H_{M}=2.1;\\
Q_{M}^{\ast}=1.0,\text{ }H_{m}=1.5,\text{ }H_{M}=8.0.
\end{array}
\right.
\]
Both the solutions of uncertainty-free and uncertainty cases are shown in
Table \ref{tab1}. By comparing the data, we find that our results in Remark
\ref{remark3} allows larger decay rate $\delta$ and upper bound $\varepsilon
^{\ast}$ than those in \cite{zf22auto}.
Moreover, when the upper bound
$\varepsilon^{\ast}$ shares the same value, our results allow much larger
uncertainties in initial condition $\sigma_{0},$ extremum value $Q_{M}^{\ast}$
and Hessian matrix $H$ than those in \cite{zf22auto}. Finally, we make a
comparison for the UB by using Remark \ref{remark4}. For a fair comparison, we choose the same value of
$\varepsilon.$ Both the solutions of uncertainty-free and uncertainty cases
are shown in Table \ref{tab2}. It follows that our results allow much smaller
values of UB than those in \cite{zf22auto}.

\begin{table}[h]
\caption{Comparison of $\varepsilon^{\ast}$ in scalar systems}%
\label{tab1}%
\center
\setlength{\tabcolsep}{3pt}
\begin{tabular}
[c]{|l|c|c|c|c|}\hline
ES: sine wave & $\sigma_{0}$ & $\sigma$ & $\delta$ & $\varepsilon^{\ast}%
$\\ \hline
\cite{zf22auto} ($Q^{\ast}=0,H=2$) & 1 & $\sqrt{2}$ & 0.010 & 0.021\\ \hline
Remark \ref{remark3} ($Q^{\ast}=0,H=2$) & 1 & $\sqrt{2}$ & 0.013 &
0.079\\ \hline
Remark \ref{remark3} ($Q^{\ast}=0,H=2$) & 2.14 &
3.30 & 0.013 & 0.021\\ \hline
\cite{zf22auto} ($\big \vert Q^{\ast}\big \vert \leq0.1,$ $1.9\leq
\big \vert H\big \vert \leq2.1$) & 1 & $\sqrt{2}$ & 0.010 & 0.018\\ \hline
Remark \ref{remark3} ($\big \vert Q^{\ast}\big \vert \leq0.1,$ $1.9\leq
\big \vert H\big \vert \leq2.1$) & 1 & $\sqrt{2}$ & 0.012 & 0.072\\ \hline
Remark \ref{remark3} ($\big \vert Q^{\ast}\big \vert \leq1.0,$ $1.6\leq
\big \vert H\big \vert \leq7.9$) & 1 & $\sqrt{2}$ & 0.010 & 0.018\\ \hline
\end{tabular}
\end{table}

\begin{table}[h]
\caption{Comparison of UB in scalar systems}%
\label{tab2}%
\center  {\scriptsize { \setlength{\tabcolsep}{3pt}
\begin{tabular}
[c]{|l|c|c|c|c|c|}\hline
ES: sine wave& $\sigma_{0}$ & $\sigma$ & $\delta$ & $\varepsilon$ & UB\\ \hline
\cite{zf22auto} ($Q^{\ast}=0,H=2$) & 1 & $\sqrt{2}$ & 0.010 &
0.021 & 0.68\\ \hline
Remark 4 ($Q^{\ast}=0,H=2$) & 2.14 & 3.30 & 0.013 & 0.021 &
$1.9\mathrm{e}^{-4}$\\ \hline
\cite{zf22auto} ($\big \vert Q^{\ast}\big \vert \leq0.1,$ $1.9\leq
\big \vert H\big \vert \leq2.1$) & 1 & $\sqrt{2}$ & 0.010 & 0.018 &
0.71\\ \hline
Remark 4 ($\big \vert Q^{\ast}\big \vert \leq1.0,$ $1.6\leq
\big \vert H\big \vert \leq7.9$) & 1 & $\sqrt{2}$ & 0.010 & 0.018 & $5.3\mathrm{e}^{-3}$\\ \hline
\end{tabular}
} }\end{table}

For the numerical simulations, we choose $\omega=2\pi/\varepsilon$ and the
same other parameter values as shown in second and fourth rows in Table
\ref{tab2} for the uncertainty-free and uncertainty cases, respectively. In
addition, in the uncertainty case, we let
\[
\left.  H=4.75+3.15\sin t\right.  ,
\]
which satisfies the condition $1.6\leq \left \vert H\right \vert \leq7.9$ as
shown in Table \ref{tab2}. Under the initial condition $\hat{\theta}(0)=2$ for the
uncertainty-free case and $\hat{\theta}(0)=1$ for the uncertainty case,
the simulation results are shown in Fig. \ref{c1c} and Fig. \ref{c1u},
respectively, from which we can see that the values of UB shown in Table
\ref{tab2} are confirmed.

\begin{figure}[ptb]
\centering
\includegraphics[width=0.45\textwidth]{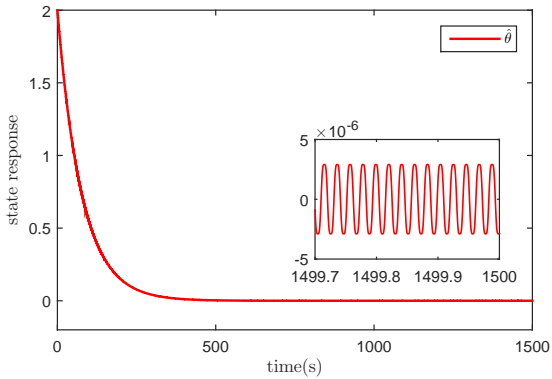}\caption{Trajectory of the
real-time estimate $\hat{\theta}$ for the uncertainty-free case}%
\label{c1c}%
\end{figure}

\begin{figure}[ptb]
\centering
\includegraphics[width=0.45\textwidth]{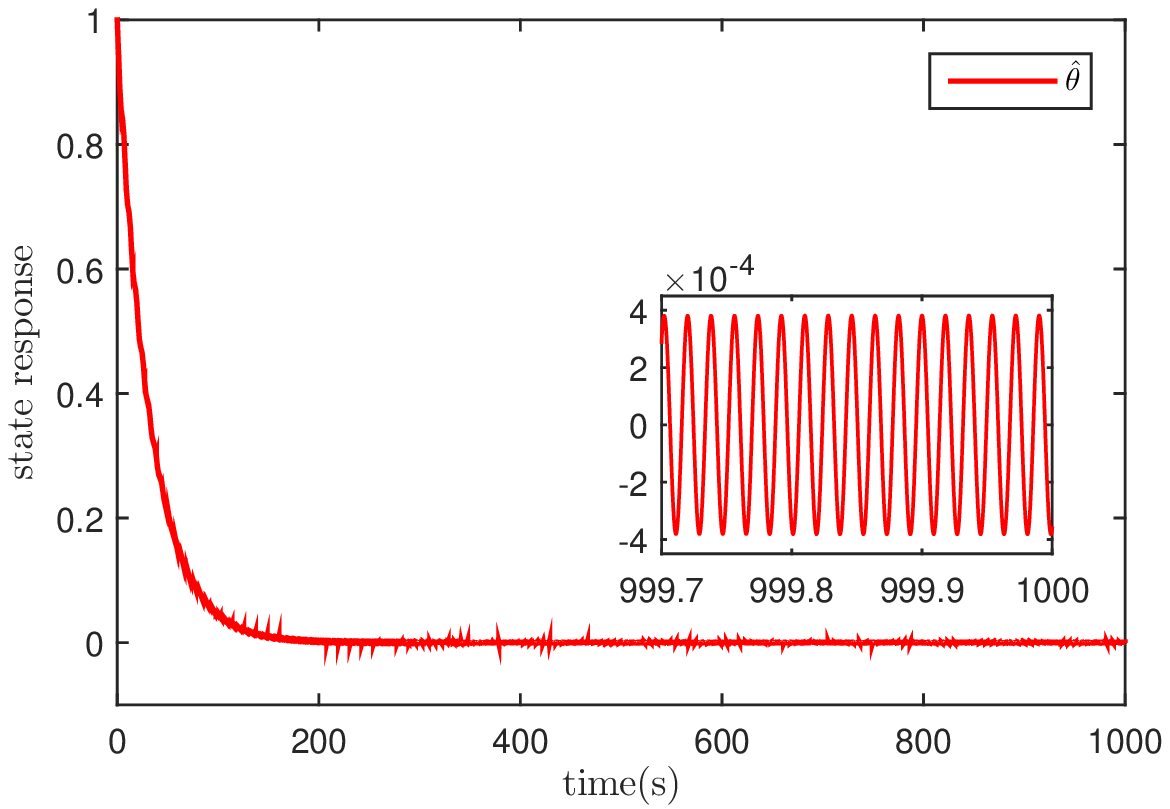}\caption{Trajectory of the
real-time estimate $\hat{\theta}$ for the uncertainty case}%
\label{c1u}%
\end{figure}

\subsubsection{Vector systems: $n=2$}

Consider an autonomous vehicle in an environment without GPS orientation
\cite{sk17book}. The goal is to reach the location of the stationary minimum
of a measurable function%
\[
\left.
\begin{array}
[c]{l}%
J(x(t),y(t))=Q^{\ast}+\frac{1}{2}\left[
\begin{array}
[c]{cc}%
x(t) & y(t)
\end{array}
\right]  H\left[
\begin{array}
[c]{c}%
x(t)\\
y(t)
\end{array}
\right] \\
=x^{2}(t)+y^{2}(t),
\end{array}
\right.
\]
where%
\[
\left.  Q^{\ast}=0,\text{ }H=\left[
\begin{array}
[c]{cc}%
2 & 0\\
0 & 2
\end{array}
\right]  .\right.
\]
We employ the classical ES%
\[
\left.
\begin{array}
[c]{l}%
x(t)=\hat{x}(t)+a_{1}\sin(\omega_{1}t),\text{ }y(t)=\hat{y}(t)+a_{2}%
\sin(\omega_{2}t),\\
\dot{\hat{x}}(t)=\frac{2k_{1}}{a_{1}}\sin(\omega_{1}t)J(t),\text{ }\dot{\hat{y}}%
(t)=\frac{2k_{2}}{a_{2}}\sin(\omega_{2}t)J(t)
\end{array}
\right.
\]
with $k_{1}=k_{2}=-0.01,$ $a_{1}=a_{2}=0.2$. The solutions are shown in Table
\ref{tab3}. It follows that Corollary \ref{corollary1} allows larger decay
rate $\delta$ and much larger upper bound $\varepsilon^{\ast}$ than those in
\cite{zf22auto}. Moreover, when the upper bound $\varepsilon^{\ast}$ shares the same value, our results allow much larger uncertainties in initial condition $\sigma_{0}$ than those in \cite{zf22auto}. Finally, we make a
comparison for the ultimate bound under the same value of $\varepsilon.$ The
solutions are shown in Table \ref{tab5}. It follows that the values of UB
obtained by Corollary \ref{corollary1} are much smaller that those in
\cite{zf22auto}.

\begin{table}[h]
\caption{Comparison of $\varepsilon^{\ast}$ in vector systems: $n=2$}%
\label{tab3}%
\center
\setlength{\tabcolsep}{3pt}
\begin{tabular}
[c]{|l|c|c|c|c|}\hline
ES: sine wave & $\sigma_{0}$ & $\sigma$ & $\delta$ & $\varepsilon^{\ast}%
$\\ \hline
\cite{zf22auto} & $\sqrt{2}$ & $2\sqrt{2}$ & 0.01 & 0.017\\ \hline
Corollary \ref{corollary1} & $\sqrt{2}$ & $2\sqrt{2}$ & 0.02 & 0.042\\ \hline
Corollary \ref{corollary1} & 2.55 & 4 &
0.02 & 0.017\\ \hline
\end{tabular}
\end{table}

\begin{table}[h]
\caption{Comparison of UB in vector systems: $n=2$}%
\label{tab5}%
\center  {\setlength{\tabcolsep}{3pt}
\begin{tabular}
[c]{|l|c|c|c|c|c|}\hline
ES: sine wave& $\sigma_{0}$ & $\sigma$ & $\delta$ & $\varepsilon$ & UB\\ \hline
\cite{zf22auto}  & $\sqrt{2}$ & 2$\sqrt{2}$ & 0.01 &
0.017 & 1.9\\ \hline
Corollary \ref{corollary1} & 2.55 & 4 & 0.02 & 0.017 & $1.4\mathrm{e}^{-3}$\\ \hline
\end{tabular}
}\end{table}

For the numerical simulations, we choose the same parameter values as in
Corollary \ref{corollary1} in Table \ref{tab5} and $\omega_{2}=2\omega_{1},$
$\omega_{1}=2\pi/\varepsilon.$ Under the initial condition $\hat{x}(0)=2,$
$\hat{y}(0)=2,$ the simulation results are shown in Fig. \ref{c2c}, from which
we can see that the value of UB shown in Table \ref{tab5} is confirmed.

\begin{figure}[ptb]
\centering
\includegraphics[width=0.45\textwidth]{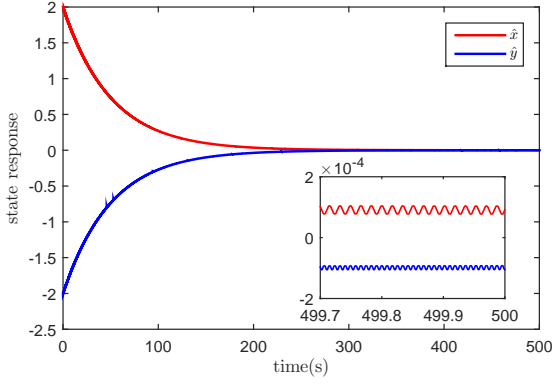}\caption{Trajectories of the
real-time estimate ($\hat{x},\hat{y}$)}%
\label{c2c}%
\end{figure}

\subsubsection{Vector systems: $n=6$}

Consider the quadratic function (\ref{eq53}) with \cite{dsej13auto}%
\[
Q^{\ast}=0,\text{ }\theta^{\ast}=[1,1,-1,-1,-1,1]^{\mathrm{T}},\text{
}H=\mathrm{diag}\{1,1,1,1,1,3\}.
\]
If $Q^{\ast}$ and $H$ are unknown, but satisfy \textbf{A2} and (\ref{eq71}) we
consider%
\[
\left.  Q_{M}^{\ast}=0.5,\text{ }H_{m}=0.8,\text{ }H_{M}=3.2.\right.
\]
We select the tuning parameters of the gradient-based ES as $k_{i}=-0.05,$
$a_{i}=1,$ $i\in \mathbf{I[}1,6\mathbf{].}$ The solutions are shown in Table
\ref{tab7}.

\begin{table}[h]
\caption{Vector systems: $n=6$}%
\label{tab7}
\center { \setlength{\tabcolsep}{3pt}
\begin{tabular}
[c]{|l|c|c|c|c|c|}\hline
ES: sine wave& $\sigma_{0}$ & $\sigma$ & $\delta$ & $\varepsilon^{\ast}$ & UB\\ \hline
Uncertainty-free case & 1 & $2$ & 0.150 & $1.0\mathrm{e}^{-2}$ & 0.315\\ \hline
Uncertainty case & 1 & $2$ & 0.025 & $1.4\mathrm{e}^{-3}$ &
0.382\\ \hline
\end{tabular}
}\end{table}

\section{Discrete-Time ES} \label{sec3}

In this section, we will establish the time-delay approach for discrete-time
ES. Although some arguments are similar to continuous-time ES, it is important
to present the discrete-time results by noting that results for discrete-time
ES are not as readily available as their continuous counterparts, and the
derivation is not straightforward.

\subsection{A Time-Delay Approach to ES}

Consider multi-variable static maps given by \cite{fkb13ejc}%
\begin{equation}
\left.  y(k)=Q(\theta(k))=Q^{\ast}+\frac{1}{2}[\theta(k)-\theta^{\ast
}]^{\mathrm{T}}H[\theta(k)-\theta^{\ast}],\right.  \label{eq119}%
\end{equation}
where $y(k)\in \mathbf{R}$ is the measurable output, $\theta(k)\in
\mathbf{R}^{n}$ is the vector input, $H=H^{\mathrm{T}}\in \mathbf{R}^{n\times
n}$ is the Hessian matrix which, without loss of generality, is positive
definite. In the present paper, we also assume that $\theta^{\ast},$ $Q^{\ast
}$ and $H$ satisfy \textbf{A1-A3}. The gradient-based classical ES algorithm
depicted in Fig. \ref{ES2} is designed as follows \cite{fkb13ejc}:%
\begin{equation}
\left.
\begin{array}
[c]{l}%
\theta \left(  k\right)  =\hat{\theta}\left(  k\right)  +S(k),\\
\hat{\theta}\left(  k+1\right)  =\hat{\theta}\left(  k\right)  +\varepsilon
LM(k)y(k),
\end{array}
\right.  \label{eq120}%
\end{equation}
where%
\begin{equation}
\left.
\begin{array}
[c]{l}%
S(k)=[a_{1}\sin(\omega_{1}k),\ldots,a_{n}\sin(\omega_{n}k)]^{\mathrm{T}},\\
M(k)=\left[  \frac{2}{a_{1}}\sin(\omega_{1}k),\ldots,\frac{2}{a_{n}}%
\sin(\omega_{n}k)\right]  ^{\mathrm{T}},
\end{array}
\right.  \label{eq120a}%
\end{equation}
in which $\omega_{i}=b_{i}\pi$ $(i\in \mathbf{I[}1,n\mathbf{]})$ with
$\left \vert b_{i}\right \vert \in(0,1)$ being a rational number and $\omega
_{i}\neq \omega_{j},i\neq j$ (see \cite{fkb13ejc}). The adaptation gain $L$\ is
chosen as%
\[
L=\mathrm{diag}\{l_{1},l_{2},\ldots,l_{n}\},\text{ }l_{i}<0,\text{ }%
i\in \mathbf{I[}1,n\mathbf{],}%
\]
such that $LH$ (and also $L\bar{H}$) being Hurwitz (for instance, $L=lI_{n}$
with a scalar $l<0$).

\begin{figure}[ptb]
\centering
\includegraphics[width=0.4\textwidth]{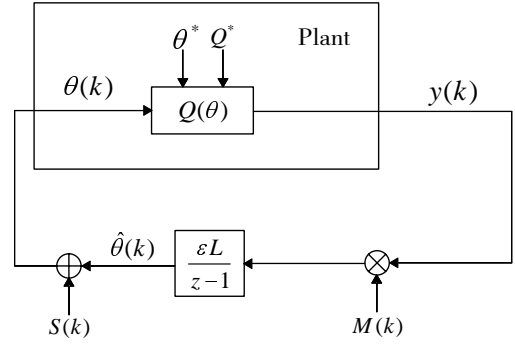}\caption{Extremum seeking
control scheme for discrete-time systems}%
\label{ES2}%
\end{figure}

Define the estimation error $\tilde{\theta}(k)$ as%
\[
\tilde{\theta}(k)=\hat{\theta}(k)-\theta^{\ast}.
\]
Then by (\ref{eq120}), the estimation error is governed by%
\begin{equation}
\left.
\begin{array}
[c]{l}%
\tilde{\theta}\left(  k+1\right)  =\tilde{\theta}\left(  k\right)
+\varepsilon LM(k)\big[Q^{\ast}\\
+\frac{1}{2}(\tilde{\theta}(k)+S(k))^{\mathrm{T}}H(\tilde{\theta
}(k)+S(k))\big]\\
=\tilde{\theta}\left(  k\right)  +\varepsilon LM(k)\big[Q^{\ast}+\frac{1}%
{2}S^{\mathrm{T}}(k)HS(k)\\
+\frac{1}{2}\tilde{\theta}^{\mathrm{T}}(k)H\tilde{\theta}(k)+S^{\mathrm{T}%
}(k)H\tilde{\theta}(k)\big].
\end{array}
\right.  \label{eq112}%
\end{equation}
To analyze the ES control system (\ref{eq112}), the averaging theory based on
\cite{bfs88tcs} was used in the existing literature (see
\cite{ckal02tac,fkb13ejc}). To be specific, let $\omega_{i}=\frac{2\pi
\alpha_{i}}{T},$ $T\in \mathbf{N}_{+},$ $\alpha_{i}\in \mathbf{Z\backslash
\{}0\mathbf{\}}$ ($i\in \mathbf{I[}1,n\mathbf{]}$) satisfying $\left \vert
2\alpha_{i}/T\right \vert <1$ and $\alpha_{i}\neq \alpha_{j},$ $i\neq j.$ This
guarantees that (\ref{eq11}), (\ref{eq12}) and (\ref{eq13}) hold below. The
averaged system of (\ref{eq112}) can be derived as \cite{fkb13ejc}%
\begin{equation}
\tilde{\theta}_{\mathrm{av}}\left(  k+1\right)  =(I_{n}+\varepsilon
LH)\tilde{\theta}_{\mathrm{av}}(k),\label{eq34}%
\end{equation}
which is exponentially stable when $\varepsilon$ is small enough since $LH$ is
Hurwitz. Similar to the continuous-time case, the basic
problem in the averaging method is also to determine in what sense the
behavior of the averaged system (\ref{eq34}) approximates the behavior of the
original system (\ref{eq112}), which may not be intuitively clear. Moreover,
the classical averaging leads to a qualitative analysis.

Inspired by \cite{yzf22tac}, we apply the time-delay to averaging of system
(\ref{eq112}). Summing in $k\geq T-1$ from $k-T+1$ to $k$ and dividing by $T$
on both sides of (\ref{eq112}), we get%
\begin{equation}
\left.
\begin{array}
[c]{l}%
\frac{1}{T}%
{\textstyle \sum \limits_{i=k-T+1}^{k}}
\tilde{\theta}\left(  i+1\right)  \\
=\frac{1}{T}%
{\textstyle \sum \limits_{i=k-T+1}^{k}}
\tilde{\theta}\left(  i\right)  +\frac{\varepsilon}{T}%
{\textstyle \sum \limits_{i=k-T+1}^{k}}
LM(i)Q^{\ast}\\
+\frac{\varepsilon}{2T}%
{\textstyle \sum \limits_{i=k-T+1}^{k}}
LM(i)S^{\mathrm{T}}(i)HS(i)\\
+\frac{\varepsilon}{2T}%
{\textstyle \sum \limits_{i=k-T+1}^{k}}
LM(i)\tilde{\theta}^{\mathrm{T}}(i)H\tilde{\theta}(i)\\
+\frac{\varepsilon}{T}%
{\textstyle \sum \limits_{i=k-T+1}^{k}}
LM(i)S^{\mathrm{T}}(i)H\tilde{\theta}(i),\text{ }k\geq T-1.
\end{array}
\right.  \label{eq113}%
\end{equation}
Set%
\begin{equation}
\bar{\theta}\left(  j\right)  =\tilde{\theta}\left(  j+1\right)
-\tilde{\theta}\left(  j\right)  .\label{eq113a}%
\end{equation}
For the term on the left-hand side of (\ref{eq113}), we have%
\begin{equation}
\left.
\begin{array}
[c]{l}%
\frac{1}{T}%
{\textstyle \sum \limits_{i=k-T+1}^{k}}
\tilde{\theta}\left(  i+1\right)  =\frac{1}{T}%
{\textstyle \sum \limits_{i=k-T+1}^{k}}
[\tilde{\theta}\left(  i+1\right)  \pm \tilde{\theta}\left(  k+1\right)  ]\\
=\frac{1}{T}%
{\textstyle \sum \limits_{i=k-T+1}^{k}}
\tilde{\theta}\left(  k+1\right)  -\frac{1}{T}%
{\textstyle \sum \limits_{i=k-T+1}^{k}}
[\tilde{\theta}\left(  k+1\right)  -\tilde{\theta}\left(  i+1\right)  ]\\
=\frac{1}{T}%
{\textstyle \sum \limits_{i=k-T+1}^{k}}
\tilde{\theta}\left(  k+1\right)  -\frac{1}{T}%
{\textstyle \sum \limits_{i=k-T+2}^{k+1}}
[\tilde{\theta}\left(  k+1\right)  -\tilde{\theta}\left(  i\right)  ]\\
=\tilde{\theta}\left(  k+1\right)  -\frac{1}{T}%
{\textstyle \sum \limits_{i=k-T+2}^{k}}
[\tilde{\theta}\left(  k+1\right)  -\tilde{\theta}\left(  i\right)  ]\\
=\tilde{\theta}\left(  k+1\right)  -\frac{1}{T}%
{\textstyle \sum \limits_{i=k-T+2}^{k}}
\text{ }%
{\textstyle \sum \limits_{j=i}^{k}}
\bar{\theta}\left(  j\right)  .
\end{array}
\right.  \label{eq97}%
\end{equation}
For the first term on the right-hand side of (\ref{eq113}), we have%
\begin{equation}
\left.
\begin{array}
[c]{l}%
\frac{1}{T}%
{\textstyle \sum \limits_{i=k-T+1}^{k}}
\tilde{\theta}\left(  i\right)  =\frac{1}{T}%
{\textstyle \sum \limits_{i=k-T+1}^{k}}
[\tilde{\theta}\left(  i\right)  \pm \tilde{\theta}\left(  k\right)  ]\\
=\frac{1}{T}%
{\textstyle \sum \limits_{i=k-T+1}^{k}}
\tilde{\theta}\left(  k\right)  -\frac{1}{T}%
{\textstyle \sum \limits_{i=k-T+1}^{k}}
[\tilde{\theta}\left(  k\right)  -\tilde{\theta}\left(  i\right)  ]\\
=\tilde{\theta}\left(  k\right)  -\frac{1}{T}%
{\textstyle \sum \limits_{i=k-T+1}^{k-1}}
[\tilde{\theta}\left(  k\right)  -\tilde{\theta}\left(  i\right)  ]\\
=\tilde{\theta}\left(  k\right)  -\frac{1}{T}%
{\textstyle \sum \limits_{i=k-T+1}^{k-1}}
\text{ }%
{\textstyle \sum \limits_{j=i}^{k-1}}
\bar{\theta}\left(  j\right)  .
\end{array}
\right.  \label{eq98}%
\end{equation}
For the second term on the right-hand side of (\ref{eq113}), we have%
\begin{equation}
\left.
\begin{array}
[c]{l}%
\frac{\varepsilon}{T}%
{\textstyle \sum \limits_{i=k-T+1}^{k}}
LM(k)Q^{\ast}\\
=\frac{\varepsilon}{T}Q^{\ast}L\operatorname{col}\Big \{ \frac{2}{a_{j}}%
{\textstyle \sum \limits_{i=k-T+1}^{k}}
\sin \big(\frac{2\pi \alpha_{j}}{T}i\big)\Big \}_{j=1}^{n}\\
=0,
\end{array}
\right.  \label{eq114}%
\end{equation}
where we have utilized%
\begin{equation}
\sum \limits_{i=k-T+1}^{k}\sin \Big(\frac{2\pi \alpha_{j}}{T}i\Big)=0,\text{
\ }j\in \mathbf{I[}1,n\mathbf{]}.\label{eq11}%
\end{equation}
For the third term on the right-hand side of (\ref{eq113}), we have%
\begin{equation}
\left.
\begin{array}
[c]{l}%
\frac{\varepsilon}{2T}%
{\textstyle \sum \limits_{i=k-T+1}^{k}}
LM(i)S^{\mathrm{T}}(i)HS(i)\\
=\frac{\varepsilon L}{2T}%
{\textstyle \sum \limits_{i=k-T+1}^{k}}
\text{ }%
{\textstyle \sum \limits_{j=1}^{n}}
\text{ }%
{\textstyle \sum \limits_{s=1}^{n}}
a_{j}a_{s}h_{js}\sin \big(\frac{2\pi \alpha_{j}}{T}i\big)\sin \big(\frac
{2\pi \alpha_{s}}{T}i\big)M(i)\\
=\frac{\varepsilon L}{T}\operatorname{col}\Big \{%
{\textstyle \sum \limits_{j=1}^{n}}
\text{ }%
{\textstyle \sum \limits_{s=1}^{n}}
\frac{a_{j}a_{s}h_{js}}{a_{m}}%
{\textstyle \sum \limits_{i=k-T+1}^{k}}
\sin \big(\frac{2\pi \alpha_{j}}{T}i\big)\\
\times \sin \big(\frac{2\pi \alpha_{s}}{T}i\big)\sin \big(\frac{2\pi \alpha_{m}}%
{T}i\big)\Big \}_{m=1}^{n}\\
=0,
\end{array}
\right.  \label{eq115}%
\end{equation}
where we have utilized%
\begin{equation}
\left.  \sum \limits_{i=k-T+1}^{k}\sin \Big(\frac{2\pi \alpha_{j}}{T}%
i\Big)\sin \Big(\frac{2\pi \alpha_{s}}{T}i\Big)\sin \Big(\frac{2\pi \alpha_{m}}%
{T}i\Big)=0.\right.  \label{eq12}%
\end{equation}
For the fourth term on the right-hand side of (\ref{eq113}), we have%
\begin{equation}
\left.
\begin{array}
[c]{l}%
\frac{\varepsilon}{2T}%
{\textstyle \sum \limits_{i=k-T+1}^{k}}
LM(i)\tilde{\theta}^{\mathrm{T}}(i)H\tilde{\theta}(i)\\
=\frac{\varepsilon}{2T}%
{\textstyle \sum \limits_{i=k-T+1}^{k}}
LM(i)[\tilde{\theta}^{\mathrm{T}}(i)H\tilde{\theta}(i)\pm \tilde{\theta
}^{\mathrm{T}}(k)H\tilde{\theta}(k)]\\
=\frac{\varepsilon L}{2T}\tilde{\theta}^{\mathrm{T}}(k)H\tilde{\theta}(k)%
{\textstyle \sum \limits_{i=k-T+1}^{k}}
M(i)\\
-\frac{\varepsilon}{2T}%
{\textstyle \sum \limits_{i=k-T+1}^{k-1}}
LM(i)[\tilde{\theta}^{\mathrm{T}}(k)H\tilde{\theta}(k)-\tilde{\theta
}^{\mathrm{T}}(i)H\tilde{\theta}(i)]\\
=-\frac{\varepsilon}{2T}%
{\textstyle \sum \limits_{i=k-T+1}^{k-1}}
LM(i)[\tilde{\theta}^{\mathrm{T}}(k)H\tilde{\theta}(k)-\tilde{\theta
}^{\mathrm{T}}(i)H\tilde{\theta}(i)]\\
=-\frac{\varepsilon}{2T}%
{\textstyle \sum \limits_{i=k-T+1}^{k-1}}
LM(i)[\tilde{\theta}^{\mathrm{T}}(k)H\tilde{\theta}(k)-\tilde{\theta
}^{\mathrm{T}}(k)H\tilde{\theta}(i)\\
+\tilde{\theta}^{\mathrm{T}}(k)H\tilde{\theta}(i)-\tilde{\theta}^{\mathrm{T}%
}(i)H\tilde{\theta}(i)]\\
=-\frac{\varepsilon}{2T}%
{\textstyle \sum \limits_{i=k-T+1}^{k-1}}
LM(i)[\tilde{\theta}^{\mathrm{T}}(k)+\tilde{\theta}^{\mathrm{T}}(i)]H%
{\textstyle \sum \limits_{j=i}^{k-1}}
\bar{\theta}(j)\\
=-\frac{\varepsilon}{2T}%
{\textstyle \sum \limits_{i=k-T+1}^{k-1}}
\text{ }%
{\textstyle \sum \limits_{j=i}^{k-1}}
LM(i)[\tilde{\theta}^{\mathrm{T}}(k)+\tilde{\theta}^{\mathrm{T}}%
(i)]H\bar{\theta}(j),
\end{array}
\right.  \label{eq116}%
\end{equation}
where we have utilized $%
{\textstyle \sum \nolimits_{i=k-T+1}^{k}}
M(i)$ $=0$ via (\ref{eq11}). For the fifth term on the right-hand side of
(\ref{eq113}), we have%
\begin{equation}
\left.
\begin{array}
[c]{l}%
\frac{\varepsilon}{T}%
{\textstyle \sum \limits_{i=k-T+1}^{k}}
LM(i)S^{\mathrm{T}}(i)H\tilde{\theta}(i)\\
=\frac{\varepsilon}{T}%
{\textstyle \sum \limits_{i=k-T+1}^{k}}
LM(i)S^{\mathrm{T}}(i)H[\tilde{\theta}(i)\pm \tilde{\theta}(k)]\\
=\frac{\varepsilon L}{T}%
{\textstyle \sum \limits_{i=k-T+1}^{k}}
M(i)S^{\mathrm{T}}(i)H\tilde{\theta}(k)\\
-\frac{\varepsilon}{T}%
{\textstyle \sum \limits_{i=k-T+1}^{k-1}}
LM(i)S^{\mathrm{T}}(i)H[\tilde{\theta}(k)-\tilde{\theta}(i)]\\
=\varepsilon LH\tilde{\theta}(k)-\frac{\varepsilon}{T}%
{\textstyle \sum \limits_{i=k-T+1}^{k-1}}
\text{ }%
{\textstyle \sum \limits_{j=i}^{k-1}}
LM(i)S^{\mathrm{T}}(i)H\bar{\theta}(j),
\end{array}
\right.  \label{eq117}%
\end{equation}
where we have utilized%
\[
\left.
{\textstyle \sum \limits_{i=k-T+1}^{k}}
M(i)S^{\mathrm{T}}(i)=TI_{n},\right.
\]
since%
\begin{equation}
\left.
{\textstyle \sum \limits_{i=k-T+1}^{k}}
\frac{2a_{j}}{a_{s}}\sin \Big(\frac{2\pi \alpha_{j}}{T}i\Big)\sin \Big(\frac
{2\pi \alpha_{s}}{T}i\Big)=\left \{
\begin{array}
[c]{cc}%
T, & j=s,\\
0, & j\neq s.
\end{array}
\right.  \right.  \label{eq13}%
\end{equation}
Finally, employing (\ref{eq97})-(\ref{eq114}), (\ref{eq115}), (\ref{eq116}%
)-(\ref{eq117}) and setting%
\begin{equation}
\left.
\begin{array}
[c]{l}%
G(k)=\frac{1}{T}%
{\textstyle \sum \limits_{i=k-T+1}^{k-1}}
\text{ }%
{\textstyle \sum \limits_{j=i}^{k-1}}
\bar{\theta}\left(  j\right)  ,\\
Y_{1}(k)=\frac{1}{2T}%
{\textstyle \sum \limits_{i=k-T+1}^{k-1}}
\text{ }%
{\textstyle \sum \limits_{j=i}^{k-1}}
LM(i)[\tilde{\theta}^{\mathrm{T}}(k)+\tilde{\theta}^{\mathrm{T}}%
(i)]H\bar{\theta}(j),\\
Y_{2}(k)=\frac{1}{T}%
{\textstyle \sum \limits_{i=k-T+1}^{k-1}}
\text{ }%
{\textstyle \sum \limits_{j=i}^{k-1}}
LM(i)S^{\mathrm{T}}(i)H\bar{\theta}(j),
\end{array}
\right.  \label{eq121}%
\end{equation}
system (\ref{eq113}) can be transformed to%
\begin{equation}
\left.
\begin{array}
[c]{l}%
\tilde{\theta}\left(  k+1\right)  -G(k+1)=(I_{n}+\varepsilon LH)\tilde{\theta
}\left(  k\right)  \\
-G(k)-\varepsilon Y_{1}(k)-\varepsilon Y_{2}(k),\text{ }k\geq T-1.
\end{array}
\right.  \label{eq118a}%
\end{equation}
System (\ref{eq118a}) is a discrete-time version of the neutral type
time-delay system w.r.t. $\tilde{\theta}.$ The solution $\tilde{\theta}\left(
k\right)  $ of system (\ref{eq112}) is also a solution of the time-delay
system (\ref{eq118a}). Thus, the practical stability of the time-delay system
(\ref{eq118a}) guarantees the practical stability of the original delay-free
ES system (\ref{eq112}). Obviously, we can extend the L-K approach in
\cite{zf22auto} to the discrete-time case to solve the practical stability of
system (\ref{eq118a}). However, the stability analysis will be complicated and the
corresponding results will be more conservative as we shown in the
continuous-time case.

Therefore, for simplifying the stability analysis, we further set%
\begin{equation}
\left.  z(k)=\tilde{\theta}\left(  k\right)  -G(k).\right.  \label{eq118b}%
\end{equation}
Then, system (\ref{eq118a}) can be rewritten as%
\begin{equation}
\left.
\begin{array}
[c]{l}%
z(k+1)=(I_{n}+\varepsilon LH)z(k)\\
+\varepsilon LHG(k)-\varepsilon Y_{1}(k)-\varepsilon Y_{2}(k),\text{ }k\geq
T-1.
\end{array}
\right.  \label{eq118}%
\end{equation}
Comparatively to the averaged system (\ref{eq34}), system (\ref{eq118}) has
the additional terms $G(k),$ $Y_{1}(k)$ and $Y_{2}(k)$ that are all of the
order of \textrm{O}$(\varepsilon)$ provided $\tilde{\theta}(k)$ (and thus $z(k)$) are of the order of $\textrm{O}(1)$. Therefore, for
small $\varepsilon>0$ system (\ref{eq118}) can be regarded as a perturbation
of system (\ref{eq34}). The resulting bound on $|z|$ will lead to the bound on
$\tilde{\theta}:|\tilde{\theta}|\leq|z|+|G|.$ We will find the bound on $z$ by
utilizing the variation of constants formula.

\begin{theorem}
\label{theorem2}Let \textbf{A1-A3} be satisfied. Consider the closed-loop
system (\ref{eq112}) with the initial condition $|\tilde{\theta}(0)|\leq
\sigma_{0}.$ Given tuning parameters $l_{i},$ $a_{i}$ ($i=\mathbf{I[}%
1,n\mathbf{]}$), $\lambda>0$ and $\varepsilon^{\ast}>0$ subject to $\lambda
\varepsilon^{\ast}<1,$ let matrix $P$ ($I_{n}\leq P\leq pI_{n}$) with a scalar
$p\geq1$ and scalar $\zeta$ satisfy the following LMI:%
\begin{equation}
\left.
\begin{array}
[c]{l}%
\Phi_{1}=\left[
\begin{array}
[c]{cc}%
\Phi_{11} & PL+\varepsilon^{\ast}\bar{H}L^{\mathrm{T}}PL\\
\ast & -\zeta I_{n}+\varepsilon^{\ast}L^{\mathrm{T}}PL
\end{array}
\right] <0 ,\\
\Phi_{11}=\bar{H}^{\mathrm{T}}L^{\mathrm{T}}P+PL\bar{H}+\varepsilon^{\ast}%
\bar{H}^{\mathrm{T}}L^{\mathrm{T}}PL\bar{H}\\
\text{ \  \  \  \  \  \ }+2\lambda P+\zeta \kappa^{2}I_{n}.
\end{array}
\right.  \label{eq126a}%
\end{equation}
Given $\sigma>\sigma_{0}>0,$ let the following inequality holds:
\begin{equation}
\left.
\begin{array}
[c]{l}%
\Phi_{2}=p\left(  \sigma_{0}+\frac{\varepsilon^{\ast}\Delta \left[
3(T-1)\lambda+2(\Delta_{1}+\Delta_{2}+\Delta_{3})\right]  }{2\lambda}\right)
^{2}\\
\text{ \  \  \  \ }-\left(  \sigma-\frac{(T-1)\varepsilon^{\ast}\Delta}%
{2}\right)  ^{2}<0,
\end{array}
\right.  \label{eq126}%
\end{equation}
where%
\begin{equation}
\left.
\begin{array}
[c]{l}%
\Delta=\left[  Q_{M}^{\ast}+\frac{H_{M}}{2}\left(  \sigma+\sqrt{%
{\textstyle \sum \nolimits_{i=1}^{n}}
a_{i}^{2}}\right)  ^{2}\right]  \sqrt{%
{\textstyle \sum \nolimits_{i=1}^{n}}
\frac{4l_{i}^{2}}{a_{i}^{2}}},\\
\Delta_{1}=\frac{(T-1)H_{M}\max_{i\in \mathbf{I[}1,n\mathbf{]}}\left \vert
l_{i}\right \vert }{2},\text{ }\Delta_{2}=\frac{(T-1)\sigma H_{M}}{2}\sqrt{%
{\textstyle \sum \nolimits_{i=1}^{n}}
\frac{4l_{i}^{2}}{a_{i}^{2}}},\\
\Delta_{3}=\frac{(T-1)H_{M}}{2}\sqrt{%
{\textstyle \sum \nolimits_{i=1}^{n}}
\frac{4l_{i}^{2}}{a_{i}^{2}}}\sqrt{%
{\textstyle \sum \nolimits_{i=1}^{n}}
a_{i}^{2}}.
\end{array}
\right.  \label{eq41}%
\end{equation}
Then for all $\varepsilon \in(0,\varepsilon^{\ast}],$ the solution of the
closed-loop system (\ref{eq112}) satisfies%
\begin{equation}
\left.
\begin{array}
[c]{l}%
|\tilde{\theta}(k)|<|\tilde{\theta}(0)|+\varepsilon \left(  T-1\right)
\Delta,\text{ }k\in \mathbf{I}[0,T-1],\\
|\tilde{\theta}(k)|<\sqrt{p}(1-\lambda \varepsilon)^{k-T+1}\left(
|\tilde{\theta}(0)|+\frac{3(T-1)\varepsilon \Delta}{2}\right)  \\
+\frac{\varepsilon \Delta(\Delta_{1}+\Delta_{2}+\Delta_{3})\sqrt{p}}{\lambda
}+\frac{(T-1)\varepsilon \Delta}{2}<\sigma,\text{ }k\geq T-1.
\end{array}
\right.  \label{eq133}%
\end{equation}
Moreover, for all $\varepsilon \in(0,\varepsilon^{\ast}]$ and all initial
conditions $|\tilde{\theta}(0)|\leq \sigma_{0},$ the ball%
\[
\left.  \Theta=\left \{  \tilde{\theta}\in \mathbf{R:}|\tilde{\theta
}|<\varepsilon \Delta \left[  \frac{T-1}{2}+\frac{\sqrt{p}(\Delta_{1}+\Delta
_{2}+\Delta_{3})}{\lambda}\right]  \right \}  \right.  \label{eq90}%
\]
is exponential attractive with a decay rate $1-\lambda \varepsilon.$
\end{theorem}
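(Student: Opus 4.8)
The plan is to mirror the continuous-time argument of Theorem~\ref{theorem1}, but working with the discrete recursion (\ref{eq112}) and the perturbed averaged system (\ref{eq118}) for $z(k)=\tilde\theta(k)-G(k)$. The whole argument rests on an a priori hypothesis $|\tilde\theta(j)|<\sigma$, which I would justify a posteriori by strong induction on $k$, exploiting that the one-step increment $\bar\theta(j)=\tilde\theta(j+1)-\tilde\theta(j)=\varepsilon LM(j)[Q^{\ast}+\tfrac12(\tilde\theta(j)+S(j))^{\mathrm T}H(\tilde\theta(j)+S(j))]$ depends on $\tilde\theta(j)$ only. Under $|\tilde\theta(j)|<\sigma$, together with \textbf{A2}, (\ref{eq71}), $|S(j)|\le\sqrt{\sum a_i^2}$ and $|LM(j)|\le\sqrt{\sum 4l_i^2/a_i^2}$, this yields the uniform bound $|\bar\theta(j)|\le\varepsilon\Delta$ with $\Delta$ as in (\ref{eq41}).

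For the initial window $k\in\mathbf I[0,T-1]$ I would simply telescope $\tilde\theta(k)=\tilde\theta(0)+\sum_{j=0}^{k-1}\bar\theta(j)$ to get $|\tilde\theta(k)|\le|\tilde\theta(0)|+k\varepsilon\Delta\le|\tilde\theta(0)|+(T-1)\varepsilon\Delta$, the first line of (\ref{eq133}). The same estimate $|\bar\theta|\le\varepsilon\Delta$, inserted into the double sums (\ref{eq121}), gives the key $\mathrm O(\varepsilon)$ bounds $|G(k)|\le\frac{(T-1)\varepsilon\Delta}{2}$, $|LHG(k)|\le\varepsilon\Delta\Delta_1$, $|Y_1(k)|\le\varepsilon\Delta\Delta_2$, $|Y_2(k)|\le\varepsilon\Delta\Delta_3$, where the triangular double sum contributes the factor $\frac{(T-1)T}{2}$ that cancels the $\frac1T$ and produces the $(T-1)$ inside each $\Delta_i$.

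The heart of the proof is turning the LMI (\ref{eq126a}) into a robust contraction of the frozen averaged matrix $A:=I_n+\varepsilon LH$. Expanding $A^{\mathrm T}PA-(1-\lambda\varepsilon)^2P$ and dividing by $\varepsilon$ leaves $H^{\mathrm T}L^{\mathrm T}P+PLH+\varepsilon H^{\mathrm T}L^{\mathrm T}PLH+2\lambda P-\lambda^2\varepsilon P$; since the quadratic term is positive semidefinite and $-\lambda^2\varepsilon P\le0$, replacing $\varepsilon$ by $\varepsilon^{\ast}$ shows it suffices to make $H^{\mathrm T}L^{\mathrm T}P+PLH+\varepsilon^{\ast}H^{\mathrm T}L^{\mathrm T}PLH+2\lambda P<0$. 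Writing $H=\bar H+\Delta H$, collecting the $\Delta H$-dependent terms as $2x^{\mathrm T}(PL+\varepsilon^{\ast}\bar H L^{\mathrm T}PL)w+\varepsilon^{\ast}w^{\mathrm T}L^{\mathrm T}PLw$ with $w=\Delta H\,x$, $|w|\le\kappa|x|$, and applying the S-procedure with multiplier $\zeta$ together with the Schur complement reproduces exactly (\ref{eq126a}). Hence for all $\varepsilon\in(0,\varepsilon^{\ast}]$ and all $\|\Delta H\|\le\kappa$ one gets $A^{\mathrm T}PA\le(1-\lambda\varepsilon)^2P$, and since $I_n\le P\le pI_n$ this propagates to $\|A^m\|\le\sqrt p\,(1-\lambda\varepsilon)^m$.

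Finally I would apply the discrete variation of constants to (\ref{eq118}) from $k=T-1$: $z(k)=A^{k-T+1}z(T-1)+\varepsilon\sum_{j=T-1}^{k-1}A^{k-1-j}[LHG(j)-Y_1(j)-Y_2(j)]$. Using $\|A^m\|\le\sqrt p(1-\lambda\varepsilon)^m$, the disturbance estimates above, $|z(T-1)|\le|\tilde\theta(0)|+\frac{3(T-1)\varepsilon\Delta}{2}$, and the geometric bound $\varepsilon\sum_{m\ge0}(1-\lambda\varepsilon)^m\le\frac1\lambda$, gives $|z(k)|\le\sqrt p(1-\lambda\varepsilon)^{k-T+1}(|\tilde\theta(0)|+\frac{3(T-1)\varepsilon\Delta}{2})+\frac{\sqrt p\,\varepsilon\Delta(\Delta_1+\Delta_2+\Delta_3)}{\lambda}$; adding $|G(k)|\le\frac{(T-1)\varepsilon\Delta}{2}$ through $|\tilde\theta|\le|z|+|G|$ yields the second line of (\ref{eq133}), and letting $k\to\infty$ gives the attractive ball $\Theta$. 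To close the induction I would verify that the right-hand side of (\ref{eq133}) stays below $\sigma$: its supremum over $k$ (attained at $k=T-1$) is exactly $\sqrt p(\sigma_0+\frac{\varepsilon^{\ast}\Delta[3(T-1)\lambda+2(\Delta_1+\Delta_2+\Delta_3)]}{2\lambda})+\frac{(T-1)\varepsilon^{\ast}\Delta}{2}$, so $\Phi_2<0$ in (\ref{eq126}) is precisely the condition guaranteeing $|\tilde\theta(k)|<\sigma$, closing the bootstrap. The main obstacle is the coupling between the robust LMI step and this self-referential a priori bound: the increment estimate $|\bar\theta|\le\varepsilon\Delta$ (and hence all the $\Delta_i$) is valid only while $|\tilde\theta|<\sigma$, so the contraction estimate and the induction must be carried out jointly rather than in sequence.
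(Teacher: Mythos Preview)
Your proposal is correct and follows essentially the same route as the paper's own proof in Appendix~A2: the same a priori bound $|\bar\theta(j)|<\varepsilon\Delta$, the same $\mathrm O(\varepsilon)$ estimates on $G,Y_1,Y_2$ via the triangular double sum, the same Lyapunov/S-procedure reduction of $A^{\mathrm T}PA\le(1-\lambda\varepsilon)^2P$ to the LMI (\ref{eq126a}), and the same discrete variation-of-constants plus geometric-series argument. The only cosmetic difference is that you frame the closing of the bootstrap as strong induction on $k$, whereas the paper phrases it as a smallest-violation contradiction in two stages ($k\le T-1$ and $k\ge T$); the two are logically equivalent.
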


\begin{proof}
See Appendix A2.
\end{proof}

\begin{remark}
Given any $\sigma_{0}$ and $\sigma^{2}>p\sigma_{0}^{2},$ inequality ($\Phi
_{2}<0$ in (\ref{eq126})) is always feasible for small enough $\varepsilon
^{\ast},$ Therefore, the result is semi-global. To the best of our knowledge,
the existing results based on the averaging theory for discrete-time ES are
qualitative (for example, \cite{ckal02tac,fkb13ejc}), i.e., the system is
stable for small $\varepsilon$ if the averaged system is stable. By contrast,
we provide, for the first time, an effective quantitative analysis method for
discrete-time ES, i.e., we can find a quantitative upper bound of
$\varepsilon$ that ensures the practical stability. Moreover, our method can
make the stability analysis very simple and easy to follow.
\end{remark}

\begin{remark}
For simplicity, we let $L=lI_{n}$ with $l<0$ being a given scalar. Then
following the arguments in Remark \ref{remark4}, we find that $\varepsilon
^{\ast}$ is of the order of \textrm{O}$(1/\left \vert l\right \vert ).$ Thus, a
smaller $\left \vert l\right \vert $ leads to a larger $\varepsilon^{\ast}.$
However, we also find that $\lambda$ is of the order of \textrm{O}$(\left \vert
l\right \vert ),$ then the decay rate $1-\lambda \varepsilon^{\ast}$ is of the
order of \textrm{O}$(1),$ which implies that we cannot adjust the value of the
decay rate $1-\lambda \varepsilon^{\ast}$ by changing the gain $L=lI_{n}.$ This
is different from the continuous-time case, and also shows the conservatism. In
addition, for given $\sigma_{0},$ $\varepsilon,$ $\lambda$ as well as $a_{i},$
$l_{i}$ ($i\in \mathbf{I[}1,n\mathbf{]}$), we can find the UB by repeating the
same process with that in Remark \ref{remark4}. Also, the lower bound of UB is
given by (\ref{eq90}) with $\sigma=0.$
\end{remark}

Next we consider a special case that the unknown Hessian $H$ is a diagonal
matrix and satisfies (\ref{eq71}). In this case, we can directly compute that
for all $k\geq0,$%
\[
\left.  \left \Vert (I_{n}+\varepsilon LH)^{k}\right \Vert \leq \left(
1-\varepsilon H_{m}\min_{i\in \mathbf{I[}1,n\mathbf{]}}\left \vert
l_{i}\right \vert \right)  ^{k}\triangleq(1-\lambda \varepsilon)^{k}.\right.
\]
Then following the arguments in Theorem \ref{theorem2}, we can present the
following corollary.

\begin{corollary}
\label{corollary2}Let \textbf{A1-A2} be satisfied and the diagonal Hessian $H$
be unknown but satisfy (\ref{eq71}). Consider the closed-loop system
(\ref{eq112}) with the initial condition $|\tilde{\theta}(0)|\leq \sigma_{0}.$
Given tuning parameters $k_{i},$ $a_{i}$ ($i=\mathbf{I[}1,n\mathbf{]}$),
$\sigma>\sigma_{0}>0,$ $\lambda>0$ and $\varepsilon^{\ast}>0$ subject to
$\lambda \varepsilon^{\ast}<1,$ let the following inequality holds:
\[
\left.  \Phi=\sigma_{0}+\frac{\varepsilon^{\ast}\Delta \lbrack(\Delta
_{1}+\Delta_{2}+\Delta_{3})+2\left(  T-1\right)  \lambda]}{\lambda}%
<\sigma,\right.
\]
where $\Delta$ and $\Delta_{i}$ $(i=\mathbf{I[}1,3\mathbf{]})$ are given by
(\ref{eq41}). Then for all $\varepsilon \in(0,\varepsilon^{\ast}],$ the
solution of the estimation error system (\ref{eq112}) satisfies%
\[
\left.
\begin{array}
[c]{l}%
|\tilde{\theta}(k)|<|\tilde{\theta}(0)|+\varepsilon \left(  T-1\right)
\Delta,\text{ }k\in \mathbf{I}[0,T-1],\\
|\tilde{\theta}(k)|<(1-\lambda \varepsilon)^{k-T+1}\left(  |\tilde{\theta
}(0)|+\frac{3(T-1)\varepsilon \Delta}{2}\right)  \\
+\frac{\varepsilon \Delta \left[  2(\Delta_{1}+\Delta_{2}+\Delta_{3}%
)+(T-1)\lambda \right]  }{2\lambda}<\sigma,\text{ }k\geq T-1.
\end{array}
\right.
\]
Moreover, for all $\varepsilon \in(0,\varepsilon^{\ast}]$ and all initial
conditions $ \vert \tilde{\theta}(0) \vert \leq \sigma_{0},$ the ball%
\[
\left.  \Theta=\left \{  \tilde{\theta}\in \mathbf{R:}|\tilde{\theta}%
(k)|<\frac{\varepsilon \Delta \left[  2(\Delta_{1}+\Delta_{2}+\Delta
_{3})+(T-1)\lambda \right]  }{2\lambda}\right \}  \right.
\]
is exponential attractive with a decay rate $1-\lambda \varepsilon$ with
$\lambda=H_{m}\min_{i\in \mathbf{I[}1,n\mathbf{]}}\left \vert l_{i}\right \vert
.$
\end{corollary}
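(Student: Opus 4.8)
The plan is to mirror the argument of Theorem \ref{theorem2}, but to exploit the diagonal structure of $H$ so as to replace the LMI/Lyapunov bound on the fundamental matrix by the explicit estimate $\|(I_n+\varepsilon LH)^k\|\leq(1-\lambda\varepsilon)^k$ recorded just above the statement. This effectively amounts to taking $p=1$ throughout, which is precisely why the $\sqrt{p}$ factors present in (\ref{eq133}) collapse here. I would start from the rewritten perturbed system (\ref{eq118}), namely $z(k+1)=(I_n+\varepsilon LH)z(k)+\varepsilon LHG(k)-\varepsilon Y_1(k)-\varepsilon Y_2(k)$ for $k\geq T-1$, and treat the $\mathrm{O}(\varepsilon)$-terms $G,Y_1,Y_2$ as disturbances acting on the exponentially stable averaged dynamics.

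First I would establish the one-step increment bound. From (\ref{eq112}) one has $\bar\theta(j)=\varepsilon LM(j)[Q^\ast+\tfrac12(\tilde\theta(j)+S(j))^{\mathrm{T}}H(\tilde\theta(j)+S(j))]$, so using $|LM(j)|\leq\sqrt{\sum_i 4l_i^2/a_i^2}$, $\|H\|\leq H_M$, $|S(j)|\leq\sqrt{\sum_i a_i^2}$ and the a priori bound $|\tilde\theta(j)|\leq\sigma$, I get $|\bar\theta(j)|\leq\varepsilon\Delta$ with $\Delta$ as in (\ref{eq41}). Substituting this into the definitions (\ref{eq121}) and carrying out the elementary double-sum counting $\sum_{i=k-T+1}^{k-1}\sum_{j=i}^{k-1}1=\tfrac{(T-1)T}{2}$ yields $|G(k)|\leq\tfrac{(T-1)\varepsilon\Delta}{2}$, $|LHG(k)|\leq\varepsilon\Delta\Delta_1$, $|Y_1(k)|\leq\varepsilon\Delta\Delta_2$ and $|Y_2(k)|\leq\varepsilon\Delta\Delta_3$, where the factor $2\sigma$ coming from $|\tilde\theta(k)|+|\tilde\theta(i)|$ produces $\Delta_2$ and the factor $|S(i)|$ produces $\Delta_3$.

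Next I would apply the discrete variation of constants formula to (\ref{eq118}) from the initial index $k=T-1$, giving $z(k)=(I_n+\varepsilon LH)^{k-T+1}z(T-1)+\sum_{j=T-1}^{k-1}(I_n+\varepsilon LH)^{k-1-j}w(j)$ with $w(j)=\varepsilon LHG(j)-\varepsilon Y_1(j)-\varepsilon Y_2(j)$ and $|w(j)|\leq\varepsilon^2\Delta(\Delta_1+\Delta_2+\Delta_3)$. Taking norms, using $\|(I_n+\varepsilon LH)^m\|\leq(1-\lambda\varepsilon)^m$ and summing the geometric series $\sum_{m\geq0}(1-\lambda\varepsilon)^m\leq 1/(\lambda\varepsilon)$ (legitimate since $0<\lambda\varepsilon<1$), I obtain $|z(k)|\leq(1-\lambda\varepsilon)^{k-T+1}|z(T-1)|+\tfrac{\varepsilon\Delta(\Delta_1+\Delta_2+\Delta_3)}{\lambda}$. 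Bounding the transient phase $k\in\mathbf{I}[0,T-1]$ directly by $|\tilde\theta(k)|\leq|\tilde\theta(0)|+(T-1)\varepsilon\Delta$ gives $|z(T-1)|\leq|\tilde\theta(T-1)|+|G(T-1)|\leq|\tilde\theta(0)|+\tfrac{3(T-1)\varepsilon\Delta}{2}$, and then $|\tilde\theta(k)|\leq|z(k)|+|G(k)|$ reproduces the two estimates of the corollary with $p=1$, the ultimate-bound constant being exactly $\tfrac{\varepsilon\Delta[2(\Delta_1+\Delta_2+\Delta_3)+(T-1)\lambda]}{2\lambda}$. Letting $k\to\infty$ removes the geometric term and yields the attractive ball $\Theta$.

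The main obstacle, and the only genuinely non-routine point, is that $\Delta$ itself depends on the a priori bound $\sigma$ used when estimating $|\bar\theta(j)|$, so the argument is circular unless closed by induction. I would run an induction on $k$: assuming $|\tilde\theta(j)|<\sigma$ for all $j\leq k$, the increment bound $|\bar\theta(j)|\leq\varepsilon\Delta$ is legitimate, the estimate above applies, and bounding the geometric factor by $1$ gives $|\tilde\theta(k+1)|\leq\sigma_0+\tfrac{\varepsilon^\ast\Delta[(\Delta_1+\Delta_2+\Delta_3)+2(T-1)\lambda]}{\lambda}=\Phi<\sigma$, which both advances the induction and confirms the strict inequalities $<\sigma$. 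Thus the hypothesis $\Phi<\sigma$ is precisely the condition that makes the bootstrap close, and the explicit diagonal estimate on the fundamental matrix is what lets the whole argument proceed without any LMI.
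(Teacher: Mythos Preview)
Your proposal is correct and follows essentially the same route as the paper. The paper itself states, immediately before Corollary~\ref{corollary2}, that in the diagonal case one computes $\|(I_n+\varepsilon LH)^k\|\leq(1-\lambda\varepsilon)^k$ directly and then ``following the arguments in Theorem~\ref{theorem2}'' obtains the corollary; this is exactly what you do, with the Lyapunov/LMI step replaced by that explicit bound (equivalently, $p=1$). Your bounds on $|\bar\theta(j)|$, $G$, $Y_1$, $Y_2$ reproduce (\ref{eq136})--(\ref{eq125}), your variation-of-constants estimate reproduces (\ref{eq131})--(\ref{eq132}) with $p=1$, and your bootstrapping step (phrased as induction rather than the paper's equivalent contradiction argument in parts \textbf{(i)}--\textbf{(ii)} of Appendix~A2) closes the a~priori bound exactly via the condition $\Phi<\sigma$.
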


\begin{remark}
When $n=1,$ Corollary \ref{corollary2} can be further improved as follows.
Note that%
\[
\left.
\begin{array}
[c]{l}%
\left \vert G(k)\right \vert <\frac{(T-1)\varepsilon \Delta}{2},\text{
}\left \vert LHG(k)\right \vert <\frac{(T-1)\varepsilon \Delta \cdot \left \vert
LH\right \vert }{2},\\
|Y_{1}(k)|<\frac{(T-1)\varepsilon \Delta \cdot \sigma \left \vert LH\right \vert
}{\left \vert a\right \vert },\text{ }|Y_{2}(k)|<(T-1)\varepsilon \Delta
\cdot \left \vert LH\right \vert ,\\
\left \vert (1+\varepsilon LH)^{k}\right \vert \leq \left(  1-\varepsilon
LH_{m}\right)  ^{k}\triangleq(1-\lambda \varepsilon)^{k},
\end{array}
\right.
\]
where%
\[
\left.  \Delta=\left[  Q_{M}^{\ast}+\frac{H_{M}}{2}\left(  \sigma+\left \vert
a\right \vert \right)  ^{2}\right]  \frac{2\left \vert L\right \vert }{\left \vert
a\right \vert }.\right.
\]
Then via (\ref{eq122}), we have%
\[
\left.
\begin{array}
[c]{l}%
|z(k)|<(1-\lambda \varepsilon)^{(k-T+1)}|z(T-1)|\\
+\frac{\varepsilon^{2}\Delta \cdot \left(  T-1\right)  \left(  3a+2\sigma
\right)  \left \vert LH\right \vert }{2a}%
{\textstyle \sum \limits_{i=T-1}^{k-1}}
(1-\left \vert LH\right \vert \varepsilon)^{(k-i-1)}\\
=(1-\lambda \varepsilon)^{(k-T+1)}|z(T-1)|\\
+\frac{\varepsilon \Delta \cdot \left(  T-1\right)  \left(  3a+2\sigma \right)
}{2a}\left[  1-(1-\left \vert LH\right \vert \varepsilon)^{(k-T+1)}\right]  \\
\leq(1-\lambda \varepsilon)^{(k-T+1)}|z(T-1)|+\frac{\varepsilon \Delta
\cdot \left(  T-1\right)  \left(  3a+2\sigma \right)  }{2a}%
\end{array}
\right.
\]
with $k\geq T-1.$ It follows that%
\[
\left.
\begin{array}
[c]{l}%
|\tilde{\theta}(k)|=(1-\lambda \varepsilon)^{k-T+1}\left(  |\tilde{\theta
}(0)|+\frac{3(T-1)\varepsilon \Delta}{2}\right)  \\
+\frac{\varepsilon \Delta \cdot \left(  T-1\right)  \left(  2a+\sigma \right)
}{a},\text{ }%
\end{array}
\right.
\]
which implies that%
\[
\left.
\begin{array}
[c]{l}%
|\tilde{\theta}(k)|=(1-\lambda \varepsilon)^{k-T+1}\left(  |\tilde{\theta
}(0)|+\frac{3(T-1)\varepsilon \Delta}{2}\right)  \\
+\frac{\varepsilon \Delta \cdot \left(  T-1\right)  \left(  2a+\sigma \right)
}{a}<\sigma,\text{ }k\geq T-1,
\end{array}
\right.
\]
if%
\[
\left.  \Phi=\sigma_{0}+\frac{\varepsilon^{\ast}\Delta \cdot \left(  T-1\right)
\left(  7a+2\sigma \right)  }{2a}<\sigma,\text{ }\forall \varepsilon
\in(0,\varepsilon^{\ast}].\right.
\]
In this case, the ultimate bound is given by%
\[
\left.  \Theta=\left \{  \tilde{\theta}\in \mathbf{R:}|\tilde{\theta}%
(k)|<\frac{\varepsilon \Delta \cdot \left(  T-1\right)  \left(  2a+\sigma \right)
}{a}\right \}  \right.
\]
with a decay rate $1-\lambda \varepsilon$ with $\lambda=\left \vert
LH_{m}\right \vert .$
\end{remark}

\subsection{Examples}

\subsubsection{Scalar systems}

Given the single-input map%
\[
Q(\theta(k))=Q^{\ast}+\frac{H}{2}\theta^{2}(k)
\]
with $Q^{\ast}=0$ and $H=2,$ we select the tuning parameters of the
gradient-based ES as $L=-0.1,$ $a=0.2$ and $T=2.$ If $Q^{\ast}$ and $H$ are
unknown, but satisfy \textbf{A2} and (\ref{eq71}) we consider
\[
\left.  Q_{M}^{\ast}=0.5,\text{ }H_{m}=1,\text{ }H_{M}=3.\right.
\]
Both the solutions of uncertainty-free and uncertainty cases are shown in
Table \ref{tab6}.

For the numerical simulations, we choose $\omega=\pi/2$ and the same other
parameter values as shown above. In addition, in the uncertainty case, we let
\[
\left.  H=2+\sin k\right.  ,
\]
which satisfies the condition $1\leq \left \vert H\right \vert \leq3$ as shown in
Table \ref{tab6}. Under the initial condition $\hat{\theta}(0)=1$ for both
cases,\ the simulation results are shown in Fig. \ref{d1c} and Fig. \ref{d1u},
respectively, from which we can see that the values of UB shown in Table
\ref{tab6} are confirmed.

\begin{table}[h]
\caption{Scalar systems}%
\label{tab6}
\center
\setlength{\tabcolsep}{3pt}
\begin{tabular}
[c]{|l|c|c|c|c|c|}\hline
ES: sine wave & $\sigma_{0}$ & $\sigma$ & $\lambda$ & $\varepsilon^{\ast}$ &
UB\\ \hline
$Q^{\ast}=0,H=2$ & 1 & $\sqrt{2}$ & 0.2 & 0.015 &
$1.6\mathrm{e}^{-3}$\\ \hline
$\big \vert Q^{\ast}\big \vert \leq1, 1\leq \big \vert H\big \vert \leq3$ & 1 &
$\sqrt{2}$ & 0.1 & 0.008 & $1.0\mathrm{e}^{-2}$\\ \hline
\end{tabular}
\end{table}

\begin{figure}[ptb]
\centering
\includegraphics[width=0.45\textwidth]{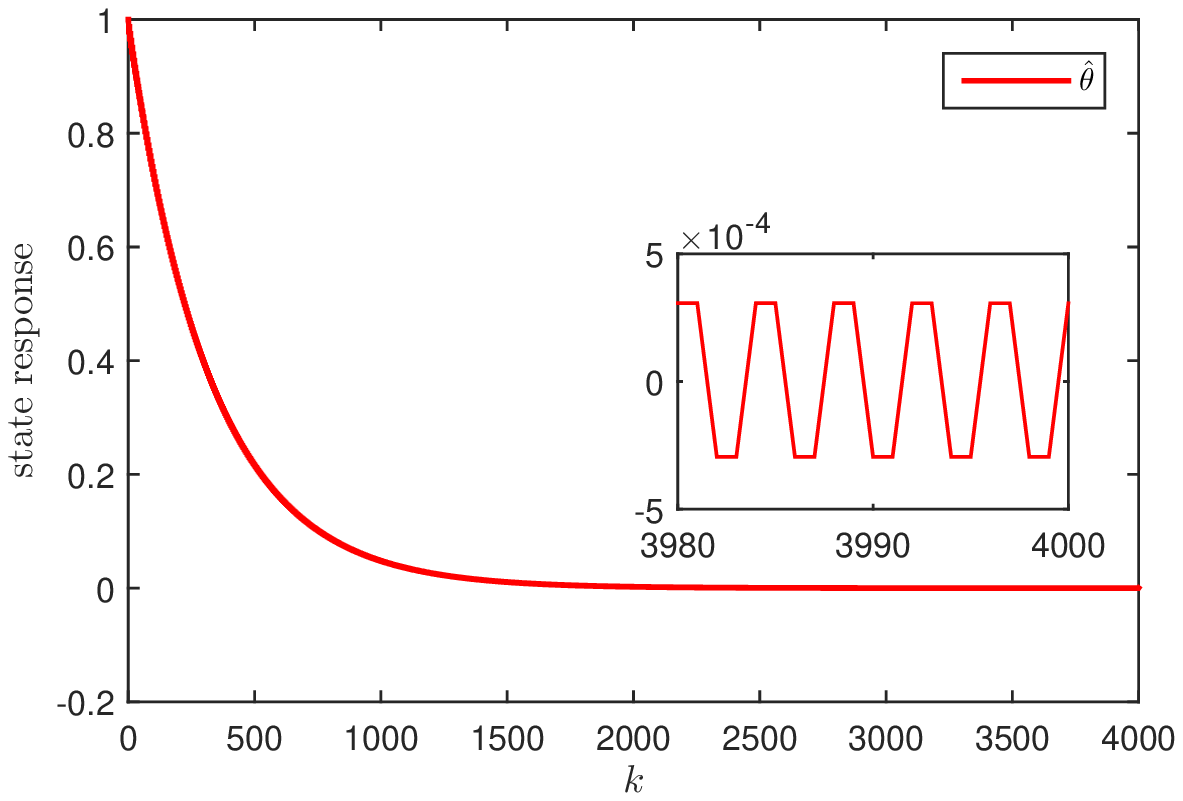}\caption{Trajectory of the
real-time estimate $\hat{\theta}$ for the uncertainty-free case}%
\label{d1c}%
\end{figure}

\begin{figure}[ptb]
\centering
\includegraphics[width=0.45\textwidth]{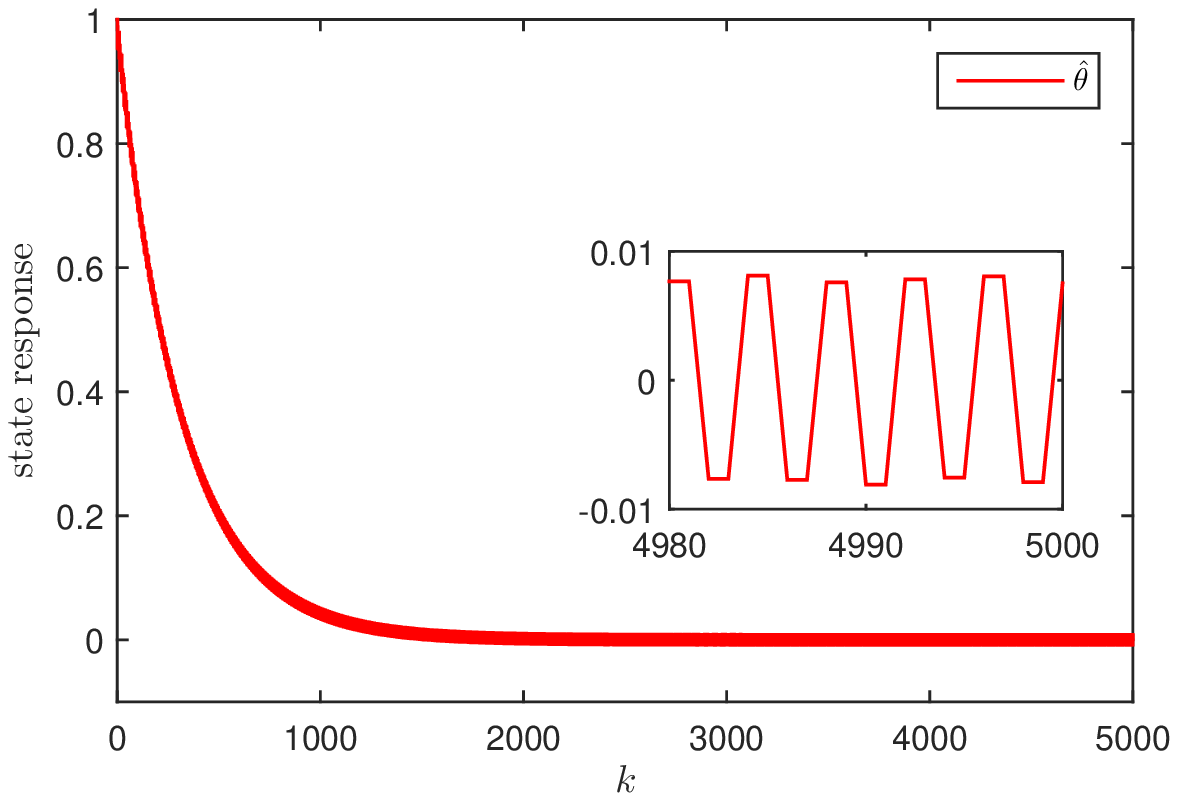}\caption{Trajectory of the
real-time estimate $\hat{\theta}$ for the uncertainty case}%
\label{d1u}%
\end{figure}

\subsubsection{Vector systems}

Consider the quadratic function (\ref{eq119}) with \cite{guay14jpc}%
\[
\left.  Q^{\ast}=1,\text{ }H=\left[
\begin{array}
[c]{cc}%
100 & 30\\
30 & 20
\end{array}
\right]  ,\text{ }\theta^{\ast}=\left[
\begin{array}
[c]{c}%
2\\
4
\end{array}
\right]  .\right.
\]
We select the tuning parameters of the gradient-based ES as $l_{1}%
=l_{2}=-0.001,$ $a=0.5$ and $T=2.$ The solutions are shown in Table \ref{tab8}.

\begin{table}[h]
\caption{Vector systems}%
\label{tab8}%
\center
\setlength{\tabcolsep}{3pt}
\begin{tabular}
[c]{|l|c|c|c|c|c|}\hline
ES: sine wave & $\sigma_{0}$ & $\sigma$ & $\lambda$ & $\varepsilon^{\ast}$ &
UB\\ \hline
Corollary 2 & 1 & $\sqrt{2}$ & 0.11 & 0.034 & $1.96\mathrm{e}^{-2}$\\ \hline
\end{tabular}
\end{table}

For the numerical simulations, we choose the same parameter values as shown
above with $\varepsilon=\varepsilon^{\ast}=0.034$ and $\omega_{2}=-\omega
_{1}=-2\pi/3$. Under the initial condition $\hat{\theta}(0)=[3,3]^{\mathrm{T}%
}$ (thus $\tilde{\theta}(0)=\hat{\theta}(0)-\theta^{\ast}=[1,-1]^{\mathrm{T}}%
$)$,$ the simulation results are shown in Fig. \ref{d2c}. It follows that the
UB value given by Table \ref{tab8} is confirmed.

\begin{figure}[ptb]
\centering
\includegraphics[width=0.45\textwidth]{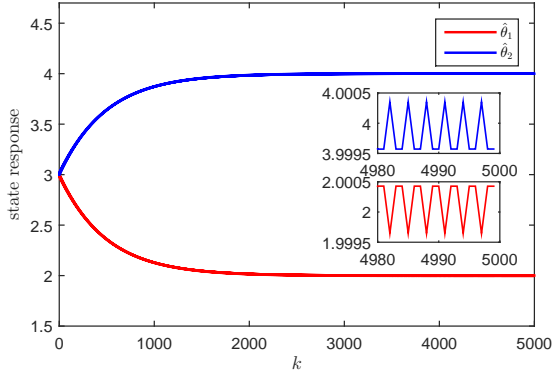}\caption{Trajectories of the
real-time estimate $\hat{\theta}$}%
\label{d2c}%
\end{figure}

\section{conclusion} \label{sec4}

This article developed a time-delay approach to ES both in the continuous and, for the first time, the discrete domains. Significantly simpler and more efficient stability analysis in terms of simple inequalities has been suggested. Explicit
conditions in terms of inequalities were established to guarantee the practical stability of the ES control systems by employing the variation of constants formula to the perturbed averaged system. Comparatively to the L-K method, the established method not
only greatly simplifies the stability analysis, but also improves the results, for instance, allows us to get larger decay rate, period of the dither signal and uncertainties of the map. We finally mention that the proposed method can be applied in the
future to ES where the static maps have sampled-data and delayed measurements. Other possible topics are ES for dynamic maps
and non-quadratic maps.

\section*{Appendix}

\subsection*{A1: Proof of Theorem \ref{theorem1}}

Assume that%
\begin{equation}
\left.  |\tilde{\theta}(t)|<\sigma,\text{ }\forall t\geq0.\right.
\label{eq61}%
\end{equation}
Note from (\ref{eq53})-(\ref{eq17}) and (\ref{eq61}) that%
\begin{equation}
\left.
\begin{array}
[c]{l}%
|y(t)|=\left \vert Q^{\ast}+\frac{1}{2}(\tilde{\theta}(t)+S(t))^{\mathrm{T}%
}H(\tilde{\theta}(t)+S(t))\right \vert \\
<Q_{M}^{\ast}+\frac{H_{M}}{2}\left(  \sigma+\sqrt{%
{\textstyle \sum \nolimits_{i=1}^{n}}
a_{i}^{2}}\right)  ^{2},\text{ }t\geq0,\\
\left \vert \dot{\tilde{\theta}}(t)\right \vert =|KM(t)y(t)|<\Delta,\text{
}t\geq0,\\
|\tilde{\theta}(t)|=\left \vert \tilde{\theta}(0)+%
{\textstyle \int \nolimits_{0}^{t}}
\dot{\tilde{\theta}}(s)\mathrm{d}s\right \vert <|\tilde{\theta}(0)|+\varepsilon
\Delta,\text{ }t\in \lbrack0,\varepsilon]
\end{array}
\right.  \label{eq20}%
\end{equation}
with $\Delta$ given by (\ref{eq70}). The first inequality in (\ref{eq60}) follows from the third inequality in
(\ref{eq20}) since $\Phi_{2}<0$ in (\ref{eq24}) implies that $\sigma
_{0}+\varepsilon^{\ast}\Delta<\sigma,$ $\forall \varepsilon \in(0,\varepsilon
^{\ast}]$. Next we consider the case with $t\geq \varepsilon.$

To make the second inequality in (\ref{eq60}) hold, we use the variation of
constants formula for (\ref{eq52}) to obtain
\begin{equation}
\left.
\begin{array}
[c]{l}%
z(t)=\mathrm{e}^{KH(t-\varepsilon)}z(\varepsilon)\\
+%
{\textstyle \int \nolimits_{\varepsilon}^{t}}
\mathrm{e}^{KH(t-s)}[KHG(s)-Y_{1}(s)-Y_{2}(s)]\mathrm{d}s,\text{ }%
t\geq \varepsilon.
\end{array}
\right.  \label{eq22}%
\end{equation}

From (\ref{eq37}) and (\ref{eq20}) we have%
\begin{equation}
\left.
\begin{array}
[c]{l}%
|G(t)|=\left \vert \frac{1}{\varepsilon}%
{\textstyle \int \nolimits_{t-\varepsilon}^{t}}
(\tau-t+\varepsilon)\dot{\tilde{\theta}}(\tau)\mathrm{d}\tau \right \vert \\
\leq \frac{1}{\varepsilon}%
{\textstyle \int \nolimits_{t-\varepsilon}^{t}}
\left \vert (\tau-t+\varepsilon)\dot{\tilde{\theta}}(\tau)\right \vert
\mathrm{d}\tau \\
<\frac{1}{\varepsilon}\Delta%
{\textstyle \int \nolimits_{t-\varepsilon}^{t}}
(\tau-t+\varepsilon)\mathrm{d}\tau \\
=\frac{\varepsilon \Delta}{2},
\end{array}
\right.  \label{eq21}%
\end{equation}
and%
\begin{equation}
\left.
\begin{array}
[c]{l}%
|KHG(t)|\leq \left \Vert K\right \Vert \left \Vert H\right \Vert |G(t)|\\
<\frac{\varepsilon \Delta \cdot H_{M}\max_{i\in \mathbf{I[}1,n\mathbf{]}%
}\left \vert k_{i}\right \vert }{2}\\
=\varepsilon \Delta \cdot \Delta_{1}%
\end{array}
\right.  \label{eq57}%
\end{equation}
with $\Delta_{1}$ given by (\ref{eq70}). From (\ref{eq56}) and (\ref{eq20}) we
have
\begin{equation}
\left.
\begin{array}
[c]{l}%
|Y_{1}(t)|=\left \vert \frac{1}{\varepsilon}%
{\textstyle \int \nolimits_{t-\varepsilon}^{t}}
{\textstyle \int \nolimits_{\tau}^{t}}
KM(\tau)\tilde{\theta}^{\mathrm{T}}(s)H\dot{\tilde{\theta}}(s)\mathrm{d}%
s\mathrm{d}\tau \right \vert \\
\leq \frac{1}{\varepsilon}%
{\textstyle \int \nolimits_{t-\varepsilon}^{t}}
{\textstyle \int \nolimits_{\tau}^{t}}
\left \vert KM(\tau)\right \vert \left \vert \tilde{\theta}^{\mathrm{T}%
}(s)\right \vert \left \Vert H\right \Vert \left \vert \dot{\tilde{\theta}%
}(s)\right \vert \mathrm{d}s\mathrm{d}\tau \\
<\frac{1}{\varepsilon}%
{\textstyle \int \nolimits_{t-\varepsilon}^{t}}
{\textstyle \int \nolimits_{\tau}^{t}}
\sqrt{%
{\textstyle \sum \nolimits_{i=1}^{n}}
\frac{4k_{i}^{2}}{a_{i}^{2}}}\sigma H_{M}\Delta \mathrm{d}s\mathrm{d}\tau \\
=\frac{\varepsilon}{2}\sqrt{%
{\textstyle \sum \nolimits_{i=1}^{n}}
\frac{4k_{i}^{2}}{a_{i}^{2}}}\sigma H_{M}\Delta=\varepsilon \Delta \cdot
\Delta_{2},
\end{array}
\right.  \label{eq58}%
\end{equation}
and%
\begin{equation}
\left.
\begin{array}
[c]{l}%
|Y_{2}(t)|=\left \vert \frac{1}{\varepsilon}%
{\textstyle \int \nolimits_{t-\varepsilon}^{t}}
{\textstyle \int \nolimits_{\tau}^{t}}
KM(\tau)S^{\mathrm{T}}(\tau)H\dot{\tilde{\theta}}(s)\mathrm{d}s\mathrm{d}%
\tau \right \vert \\
\leq \frac{1}{\varepsilon}%
{\textstyle \int \nolimits_{t-\varepsilon}^{t}}
{\textstyle \int \nolimits_{\tau}^{t}}
\left \vert KM(\tau)\right \vert \left \vert S^{\mathrm{T}}(\tau)\right \vert
\left \Vert H\right \Vert \left \vert \dot{\tilde{\theta}}(s)\right \vert
\mathrm{d}s\mathrm{d}\tau \\
<\frac{1}{\varepsilon}%
{\textstyle \int \nolimits_{t-\varepsilon}^{t}}
{\textstyle \int \nolimits_{\tau}^{t}}
\sqrt{%
{\textstyle \sum \nolimits_{i=1}^{n}}
\frac{4k_{i}^{2}}{a_{i}^{2}}}\sqrt{%
{\textstyle \sum \nolimits_{i=1}^{n}}
a_{i}^{2}}H_{M}\Delta \mathrm{d}s\mathrm{d}\tau \\
=\frac{\varepsilon}{2}\sqrt{%
{\textstyle \sum \nolimits_{i=1}^{n}}
\frac{4k_{i}^{2}}{a_{i}^{2}}}\sqrt{%
{\textstyle \sum \nolimits_{i=1}^{n}}
a_{i}^{2}}H_{M}\Delta=\varepsilon \Delta \cdot \Delta_{3},
\end{array}
\right.  \label{eq59}%
\end{equation}
where $\Delta_{2}$ and $\Delta_{3}$ are given by (\ref{eq70}).
Via (\ref{eq22}) and (\ref{eq57})-(\ref{eq59}), we obtain%
\begin{equation}
\left.
\begin{array}
[c]{l}%
|z(t)|\leq \left \Vert \mathrm{e}^{KH(t-\varepsilon)}\right \Vert |z(\varepsilon
)|\\
+%
{\textstyle \int \nolimits_{\varepsilon}^{t}}
\left \Vert \mathrm{e}^{KH(t-s)}\right \Vert [|KHG(s)|+|Y_{1}(s)|+|Y_{2}%
(s)|]\mathrm{d}s\\
<\left \Vert \mathrm{e}^{KH(t-\varepsilon)}\right \Vert |z(\varepsilon)|\\
+\varepsilon \Delta(\Delta_{1}+\Delta_{2}+\Delta_{3})%
{\textstyle \int \nolimits_{\varepsilon}^{t}}
\left \Vert \mathrm{e}^{KH(t-s)}\right \Vert \mathrm{d}s,\text{ }t\geq
\varepsilon.
\end{array}
\right.  \label{eq23}%
\end{equation}

In order to derive a bound on $\mathrm{e}^{KHt},$ consider the nominal system%
\begin{equation}
\left.  \dot{z}(t)=KHz(t)=K(\bar{H}+\Delta H)z(t),\text{ }t\geq0,\right.
\label{eq26}%
\end{equation}
where we noted \textbf{A3}. Choose the Lyapunov function $V(t)=z^{\mathrm{T}%
}(t)Pz(t)$ with $P$ satisfying $I_{n}\leq P\leq pI_{n}$. Then%
\begin{equation}
\left.
\begin{array}
[c]{l}%
\dot{V}(t)+2\delta V(t)=2z^{\mathrm{T}}(t)P[K(\bar{H}+\Delta H)]z(t)\\
+2\delta z^{\mathrm{T}}(t)Pz(t).
\end{array}
\right.  \label{eq83}%
\end{equation}
To compensate $\Delta Hz(t)$ in (\ref{eq83}) we apply $S$-procedure, we add to
$\dot{V}(t)+2\delta V(t)$ the left hand part of
\[
\left.  \zeta \left(  \kappa^{2}\left \vert z(t)\right \vert ^{2}-\left \vert
\Delta Hz(t)\right \vert ^{2}\right)  \geq0\right.
\]
with some $\zeta>0.$ Then, we have%
\[
\left.
\begin{array}
[c]{l}%
\dot{V}(t)+2\delta V(t)\leq2z^{\mathrm{T}}(t)P[K(\bar{H}+\Delta H)]z(t)\\
+2\delta z^{\mathrm{T}}(t)Pz(t)+\zeta \left(  \kappa^{2}\left \vert
z(t)\right \vert ^{2}-\left \vert \Delta Hz(t)\right \vert ^{2}\right)  \\
=\xi^{\mathrm{T}}(t)\Phi_{1}\xi(t),
\end{array}
\right.
\]
where $\xi^{\mathrm{T}}(t)=[z^{\mathrm{T}}(t),z^{\mathrm{T}}(t)(\Delta
H)^{\mathrm{T}}]$ and $\Phi_{1}$ is given by (\ref{eq24}). Thus, if $\Phi
_{1}<0$ in (\ref{eq24}), we have%
\begin{equation}
\left.  \dot{V}(t)\leq-2\delta V(t),\text{ }t\geq0,\right.  \label{eq84}%
\end{equation}
which with $I_{n}\leq P\leq pI_{n}$ yields%
\[
\left.  |z(t)|^{2}\leq V(t)\leq \mathrm{e}^{-2\delta t}V(0)\leq p\mathrm{e}%
^{-2\delta t}|z(0)|^{2},\right.
\]
namely,%
\begin{equation}
\left.  |z(t)|\leq \sqrt{p}\mathrm{e}^{-\delta t}|z(0)|,\text{ }t\geq0.\right.
\label{eq27}%
\end{equation}
On the other hand, by using the variation of constants formula for
(\ref{eq26}), we have%
\begin{equation}
\left.  z(t)=\mathrm{e}^{KHt}z(0),\text{ }t\geq0.\right.  \label{eq28}%
\end{equation}
By norm's definition and (\ref{eq27})-(\ref{eq28}), we obtain%
\begin{equation}
\left.
\begin{array}
[c]{l}%
\left \Vert \mathrm{e}^{KHt}\right \Vert =\sup_{|z(0)|=1}\left \vert
\mathrm{e}^{KHt}z(0)\right \vert \\
\overset{(\ref{eq28})}{=}\sup_{|z(0)|=1}\left \vert z(t)\right \vert \\
\overset{(\ref{eq27})}{\leq}\sqrt{p}\mathrm{e}^{-\delta t}.
\end{array}
\right.  \label{eq29}%
\end{equation}
With (\ref{eq29}), inequality (\ref{eq23}) can be continued as%
\begin{equation}
\left.
\begin{array}
[c]{l}%
|z(t)|<\sqrt{p}\mathrm{e}^{-\delta(t-\varepsilon)}|z(\varepsilon)|\\
+\varepsilon \Delta(\Delta_{1}+\Delta_{2}+\Delta_{3})\sqrt{p}%
{\textstyle \int \nolimits_{\varepsilon}^{t}}
\mathrm{e}^{-\delta(t-s)}\mathrm{d}s\\
=\sqrt{p}\mathrm{e}^{-\delta(t-\varepsilon)}|z(\varepsilon)|\\
+\frac{\varepsilon \Delta(\Delta_{1}+\Delta_{2}+\Delta_{3})\sqrt{p}}{\delta
}\left(  1-\mathrm{e}^{-\delta(t-\varepsilon)}\right)  \\
\leq \sqrt{p}\mathrm{e}^{-\delta(t-\varepsilon)}|z(\varepsilon)|+\frac
{\varepsilon \Delta(\Delta_{1}+\Delta_{2}+\Delta_{3})\sqrt{p}}{\delta}.
\end{array}
\right.  \label{eq25}%
\end{equation}
Note from (\ref{eq52a}), (\ref{eq20}) and (\ref{eq21}) that%
\[
\left.
\begin{array}
[c]{l}%
|z(\varepsilon)|=|\tilde{\theta}(\varepsilon)-G(\varepsilon)|\leq
|\tilde{\theta}(\varepsilon)|+|G(\varepsilon)|\\
<|\tilde{\theta}(0)|+\varepsilon \Delta+\frac{\varepsilon \Delta}{2}%
=|\tilde{\theta}(0)|+\frac{3\varepsilon \Delta}{2},
\end{array}
\right.
\]
by which, inequality (\ref{eq25}) can be continued as%
\[
\left.
\begin{array}
[c]{l}%
|z(t)|<\sqrt{p}\mathrm{e}^{-\delta(t-\varepsilon)}\big(|\tilde{\theta
}(0)|+\frac{3\varepsilon \Delta}{2}\big)\\
+\frac{\varepsilon \Delta(\Delta_{1}+\Delta_{2}+\Delta_{3})\sqrt{p}}{\delta
},\text{ }t\geq \varepsilon.
\end{array}
\right.
\]
Then%
\[
\left.
\begin{array}
[c]{l}%
\left \vert \tilde{\theta}(t)\right \vert =|z(t)+G(t)|\leq|z(t)|+|G(t)|\\
<\sqrt{p}\mathrm{e}^{-\delta(t-\varepsilon)}\big(|\tilde{\theta}%
(0)|+\frac{3\varepsilon \Delta}{2}\big)\\
+\frac{\varepsilon \Delta(\Delta_{1}+\Delta_{2}+\Delta_{3})\sqrt{p}}{\delta
}+\frac{\varepsilon \Delta}{2},\text{ }t\geq \varepsilon,
\end{array}
\right.
\]
which implies the second inequality in (\ref{eq60}) due to%
\[
\left.  \sqrt{p}\left(  \sigma_{0}+\frac{\varepsilon^{\ast}\Delta
\lbrack2(\Delta_{1}+\Delta_{2}+\Delta_{3})+3\delta]}{2\delta}\right)
+\frac{\varepsilon^{\ast}\Delta}{2}<\sigma,\right.
\]
namely,%
\[
\left.  \sqrt{p}\left(  \sigma_{0}+\frac{\varepsilon^{\ast}\Delta
\lbrack2(\Delta_{1}+\Delta_{2}+\Delta_{3})+3\delta]}{2\delta}\right)
<\sigma-\frac{\varepsilon^{\ast}\Delta}{2}.\right.
\]
The latter, by squaring of both sides, is equivalent to $\Phi_{2}<0$ in
(\ref{eq24}).

By contradiction-based arguments in \cite{zf22auto} (see Appendix A), it can
be proved that (\ref{eq24}) results in (\ref{eq61}). The proof is finished.

\subsection*{A2: Proof of Theorem \ref{theorem2}}

Assume that%
\begin{equation}
\left.  |\tilde{\theta}(k)|<\sigma,\text{ }\forall k\geq0.\right.
\label{eq137}%
\end{equation}
Then follows from (\ref{eq71}), (\ref{eq119})-(\ref{eq112}), (\ref{eq113a})
and (\ref{eq137}) we have%
\begin{equation}
\left.
\begin{array}
[c]{l}%
|y(k)|=\left \vert Q^{\ast}+\frac{1}{2}(\tilde{\theta}(k)+S(k))^{\mathrm{T}%
}H(\tilde{\theta}(k)+S(k))\right \vert \\
<Q_{M}^{\ast}+\frac{H_{M}}{2}\left(  \sigma+\sqrt{%
{\textstyle \sum \nolimits_{i=1}^{n}}
a_{i}^{2}}\right)  ^{2},\text{ }k\geq0,\\
\left \vert \bar{\theta}(k)\right \vert =|\varepsilon LM(k)y(k)|<\varepsilon
\Delta,\text{ }k\geq0,\\
\left \vert \tilde{\theta}(k)\right \vert =\left \vert \tilde{\theta}(0)+%
{\textstyle \sum \nolimits_{i=0}^{k-1}}
\bar{\theta}\left(  i\right)  \right \vert \\
<|\tilde{\theta}(0)|+\left(  T-1\right)  \varepsilon \Delta,\text{ }%
k\in \mathbf{I}[0,T-1]
\end{array}
\right.  \label{eq136}%
\end{equation}
with $\Delta$ given by (\ref{eq41}). The first inequality in (\ref{eq133})
follows from the third inequality in (\ref{eq136}) since $\Phi_{2}<\sigma$ in
(\ref{eq126}) implies that $\sigma_{0}+\varepsilon^{\ast}(T-1)\Delta<\sigma,$
$\forall \varepsilon \in(0,\varepsilon^{\ast}].$

To make the second inequality in (\ref{eq133}) hold, we use the variation of
constants formula for (\ref{eq118}) to get%
\begin{equation}
\left.
\begin{array}
[c]{l}%
z(k)=(I_{n}+\varepsilon LH)^{k-T+1}z(T-1)\\
+\varepsilon%
{\textstyle \sum \limits_{i=T-1}^{k-1}}
(I_{n}+\varepsilon LH)^{k-i-1}[LHG(i)-Y_{1}(i)-Y_{2}(i)]
\end{array}
\right.  \label{eq122}%
\end{equation}
with $k\geq T-1.$ From (\ref{eq121}) and (\ref{eq136}) we have%
\begin{equation}
\left.
\begin{array}
[c]{l}%
|G(k)|=\frac{1}{T}\left \vert
{\textstyle \sum \limits_{i=k-T+1}^{k-1}}
\text{ }%
{\textstyle \sum \limits_{j=i}^{k-1}}
\bar{\theta}\left(  j\right)  \right \vert \\
\leq \frac{1}{T}%
{\textstyle \sum \limits_{i=k-T+1}^{k-1}}
\text{ }%
{\textstyle \sum \limits_{j=i}^{k-1}}
\left \vert \bar{\theta}\left(  j\right)  \right \vert \\
<\frac{(T-1)\varepsilon \Delta}{2},\text{ }k\geq T-1,
\end{array}
\right.  \label{eq36}%
\end{equation}%
\begin{equation}
\left.
\begin{array}
[c]{l}%
\left \vert LHG(k)\right \vert \leq \left \Vert L\right \Vert \left \Vert
H\right \Vert \left \vert G(k)\right \vert \\
<\max_{i\in \mathbf{I[}1,n\mathbf{]}}\left \vert l_{i}\right \vert H_{M}%
\frac{(T-1)\varepsilon \Delta}{2}\\
=\varepsilon \Delta \cdot \Delta_{1},\text{ }k\geq T-1,
\end{array}
\right.  \label{eq123}%
\end{equation}%
\begin{equation}
\left.
\begin{array}
[c]{l}%
|Y_{1}(k)|=\frac{1}{2T}\left \vert
{\textstyle \sum \limits_{i=k-T+1}^{k-1}}
\text{ }%
{\textstyle \sum \limits_{j=i}^{k-1}}
LM(i)[\tilde{\theta}^{\mathrm{T}}(k)+\tilde{\theta}^{\mathrm{T}}%
(i)]H\bar{\theta}(j)\right \vert \\
\leq \frac{1}{2T}%
{\textstyle \sum \limits_{i=k-T+1}^{k-1}}
\text{ }%
{\textstyle \sum \limits_{j=i}^{k-1}}
\left \vert LM(i)\right \vert \left \vert \tilde{\theta}^{\mathrm{T}}%
(k)+\tilde{\theta}^{\mathrm{T}}(i)\right \vert \left \Vert H\right \Vert
\left \vert \bar{\theta}(j)\right \vert \\
<\frac{1}{2T}%
{\textstyle \sum \limits_{i=k-T+1}^{k-1}}
\text{ }%
{\textstyle \sum \limits_{j=i}^{k-1}}
\sqrt{%
{\textstyle \sum \nolimits_{i=1}^{n}}
\frac{4l_{i}^{2}}{a_{i}^{2}}}\cdot2\sigma \cdot H_{M}\cdot \varepsilon \Delta \\
=\frac{\varepsilon \Delta \cdot \sigma H_{M}}{T}\sqrt{%
{\textstyle \sum \nolimits_{i=1}^{n}}
\frac{4l_{i}^{2}}{a_{i}^{2}}}\frac{T(T-1)}{2}\\
=\frac{\varepsilon \Delta \cdot(T-1)\sigma H_{M}}{2}\sqrt{%
{\textstyle \sum \nolimits_{i=1}^{n}}
\frac{4l_{i}^{2}}{a_{i}^{2}}}\\
=\varepsilon \Delta \cdot \Delta_{2},
\end{array}
\right.  \label{eq124}%
\end{equation}
and%
\begin{equation}
\left.
\begin{array}
[c]{l}%
|Y_{2}(k)|=\frac{1}{T}\left \vert
{\textstyle \sum \limits_{i=k-T+1}^{k-1}}
\text{ }%
{\textstyle \sum \limits_{j=i}^{k-1}}
LM(i)S^{\mathrm{T}}(i)H\bar{\theta}(j)\right \vert \\
\leq \frac{1}{T}%
{\textstyle \sum \limits_{i=k-T+1}^{k-1}}
\text{ }%
{\textstyle \sum \limits_{j=i}^{k-1}}
\left \vert LM(i)\right \vert \left \vert S^{\mathrm{T}}(i)\right \vert \left \Vert
H\right \Vert \left \vert \bar{\theta}(j)\right \vert \\
<\frac{1}{T}%
{\textstyle \sum \limits_{i=k-T+1}^{k-1}}
\text{ }%
{\textstyle \sum \limits_{j=i}^{k-1}}
\sqrt{%
{\textstyle \sum \nolimits_{i=1}^{n}}
\frac{4l_{i}^{2}}{a_{i}^{2}}}\cdot \sqrt{%
{\textstyle \sum \nolimits_{i=1}^{n}}
a_{i}^{2}}\cdot H_{M}\cdot \varepsilon \Delta \\
=\frac{\varepsilon \Delta \cdot H_{M}}{T}\sqrt{%
{\textstyle \sum \nolimits_{i=1}^{n}}
\frac{4l_{i}^{2}}{a_{i}^{2}}}\sqrt{%
{\textstyle \sum \nolimits_{i=1}^{n}}
a_{i}^{2}}\frac{T(T-1)}{2}\\
<\frac{\varepsilon \Delta \cdot(T-1)H_{M}}{2}\sqrt{%
{\textstyle \sum \nolimits_{i=1}^{n}}
\frac{4l_{i}^{2}}{a_{i}^{2}}}\sqrt{%
{\textstyle \sum \nolimits_{i=1}^{n}}
a_{i}^{2}}\\
=\varepsilon \Delta \cdot \Delta_{3},
\end{array}
\right.  \label{eq125}%
\end{equation}
where $\Delta_{i}$ ($i\in \mathbf{I[}1,3\mathbf{]}$) are given by (\ref{eq41}).
Via (\ref{eq122}) and (\ref{eq123})-(\ref{eq125}), we have%
\begin{equation}
\left.
\begin{array}
[c]{l}%
\left \vert z(k)\right \vert \leq \left \Vert (I_{n}+\varepsilon LH)^{k-T+1}%
\right \Vert \left \vert z(T-1)\right \vert \\
+\varepsilon%
{\textstyle \sum \limits_{i=T-1}^{k-1}}
\left \Vert (I_{n}+\varepsilon LH)^{k-i-1}\right \Vert \\
\times \lbrack \left \vert LHG(i)\right \vert +\left \vert Y_{1}(i)\right \vert
+\left \vert Y_{2}(i)\right \vert ]\\
<\left \Vert (I_{n}+\varepsilon LH)^{k-T+1}\right \Vert \left \vert
z(T-1)\right \vert \\
+\varepsilon^{2}\Delta%
{\textstyle \sum \limits_{j=1}^{3}}
\Delta_{j}%
{\textstyle \sum \limits_{i=T-1}^{k-1}}
\left \Vert (I_{n}+\varepsilon LH)^{k-i-1}\right \Vert ,\text{ }k\geq T-1.
\end{array}
\right.  \label{eq131}%
\end{equation}

For deriving a bound on $(I_{n}+\varepsilon LH)^{k},$ $k\geq0,$ consider the
nominal system%
\begin{equation}
\left.
\begin{array}
[c]{l}%
z(k+1)=\left(  I_{n}+\varepsilon LH\right)  z(k)\\
=[I_{n}+\varepsilon L\left(  \bar{H}+\Delta H\right)  ]z(k),\text{ }k\geq0,
\end{array}
\right.  \label{eq127}%
\end{equation}
where we noted \textbf{A3}. Choose the Lyapunov function $V(k)=z^{\mathrm{T}%
}(k)Pz(k)$ with $P$ satisfying $I_{n}\leq P\leq pI_{n}$. Then for
$\forall \varepsilon \in(0,\varepsilon^{\ast}],$%
\begin{equation}
\left.
\begin{array}
[c]{l}%
V(k+1)-(1-\lambda \varepsilon)^{2}V(k)\\
=z^{\mathrm{T}}(k+1)Pz(k+1)-(1-\lambda \varepsilon)^{2}z^{\mathrm{T}}(k)Pz(k)\\
=z^{\mathrm{T}}(k)[I_{n}+\varepsilon L\left(  \bar{H}+\Delta H\right)
]^{\mathrm{T}}P[I_{n}+\varepsilon L\left(  \bar{H}+\Delta H\right)  ]z(k)\\
-(1-\lambda \varepsilon)^{2}z^{\mathrm{T}}(k)Pz(k)\\
=\varepsilon z^{\mathrm{T}}(k)[\left(  \bar{H}+\Delta H\right)  ^{\mathrm{T}%
}L^{\mathrm{T}}P+PL\left(  \bar{H}+\Delta H\right)  \\
+\varepsilon \left(  \bar{H}+\Delta H\right)  ^{\mathrm{T}}L^{\mathrm{T}%
}PL\left(  \bar{H}+\Delta H\right)  +\lambda(2-\lambda \varepsilon)P]z(k)\\
\leq \varepsilon z^{\mathrm{T}}(k)[\left(  \bar{H}+\Delta H\right)
^{\mathrm{T}}L^{\mathrm{T}}P+PL\left(  \bar{H}+\Delta H\right)  \\
+\varepsilon^{\ast}\left(  \bar{H}+\Delta H\right)  ^{\mathrm{T}}%
L^{\mathrm{T}}PL\left(  \bar{H}+\Delta H\right)  +2\lambda P]z(k).
\end{array}
\right.  \label{eq86}%
\end{equation}
To compensate $\Delta Hz(k)$ in (\ref{eq86}) we apply $S$-procedure, we add to
$V(k+1)-(1-\lambda \varepsilon)^{2}V(k)$ the left hand part of
\begin{equation}
\left.  \zeta \varepsilon \left(  \kappa^{2}\left \vert z(k)\right \vert
^{2}-\left \vert \Delta Hz(k)\right \vert ^{2}\right)  \geq0\right.
\label{eq87}%
\end{equation}
with some $\zeta>0.$ Then, from (\ref{eq86}) and (\ref{eq87}), we have%
\[
\left.
\begin{array}
[c]{l}%
V(k+1)-(1-\lambda \varepsilon)^{2}V(k)\\
\leq \varepsilon z^{\mathrm{T}}(k)[\left(  \bar{H}+\Delta H\right)
^{\mathrm{T}}L^{\mathrm{T}}P+PL\left(  \bar{H}+\Delta H\right)  \\
+\varepsilon^{\ast}\left(  \bar{H}+\Delta H\right)  ^{\mathrm{T}}%
L^{\mathrm{T}}PL\left(  \bar{H}+\Delta H\right)  +2\lambda P]z(k)\\
+\zeta \varepsilon \left(  \kappa^{2}\left \vert z(k)\right \vert ^{2}-\left \vert
\Delta Hz(k)\right \vert ^{2}\right)  \\
=\varepsilon \xi^{\mathrm{T}}(k)\Phi_{1}\xi(k),
\end{array}
\right.
\]
where $\xi^{\mathrm{T}}(k)=[z^{\mathrm{T}}(k),z^{\mathrm{T}}(k)(\Delta
H)^{\mathrm{T}}]$ and $\Phi_{1}$ is given by (\ref{eq126a}). Thus, if
$\Phi_{1}<0$ in (\ref{eq126a}), we have%
\[
\left.  V(k+1)\leq(1-\lambda \varepsilon)^{2}V(k),\text{ }k\geq0,\right.
\label{eq88}%
\]
which with $I_{n}\leq P\leq pI_{n}$ yields%
\[
\left.  |z(k)|^{2}\leq V(k)\leq(1-\lambda \varepsilon)^{2k}V(0)\leq
p(1-\lambda \varepsilon)^{2k}|z(0)|^{2},\right.
\]
then%
\begin{equation}
\left.  |z(k)|\leq \sqrt{p}(1-\lambda \varepsilon)^{k}|z(0)|,\text{ }%
k\geq0.\right.  \label{eq128}%
\end{equation}
On the other hand, by using the variation of constants formula for
(\ref{eq127}), we have%
\begin{equation}
\left.  z(k)=(I_{n}+\varepsilon LH)^{k}z(0),\text{ }k\geq0.\right.
\label{eq129}%
\end{equation}
By norm's definition and (\ref{eq128})-(\ref{eq129}), we find%
\begin{equation}
\left.
\begin{array}
[c]{l}%
\left \Vert (I_{n}+\varepsilon LH)^{k}\right \Vert =\sup_{|z(0)|=1}\left \vert
(I_{n}+\varepsilon LH)^{k}z(0)\right \vert \\
\overset{(\ref{eq129})}{=}\sup_{|z(0)|=1}\left \vert z(k)\right \vert
\overset{(\ref{eq128})}{\leq}\sqrt{p}(1-\lambda \varepsilon)^{k}.
\end{array}
\right.  \label{eq130}%
\end{equation}
With (\ref{eq130}), inequality (\ref{eq131}) can be continued as%
\begin{equation}
\left.
\begin{array}
[c]{l}%
|z(k)|<\sqrt{p}(1-\lambda \varepsilon)^{(k-T+1)}|z(T-1)|\\
+\varepsilon^{2}\Delta%
{\textstyle \sum \limits_{j=1}^{3}}
\Delta_{j}\sqrt{p}%
{\textstyle \sum \limits_{i=T-1}^{k-1}}
(1-\lambda \varepsilon)^{(k-i-1)}\\
=\sqrt{p}(1-\lambda \varepsilon)^{(k-T+1)}|z(T-1)|\\
+\frac{\varepsilon \Delta(\Delta_{1}+\Delta_{2}+\Delta_{3})\sqrt{p}}{\lambda
}\left[  1-(1-\lambda \varepsilon)^{(k-T+1)}\right]  \\
\leq \sqrt{p}(1-\lambda \varepsilon)^{(k-T+1)}|z(T-1)|\\
+\frac{\varepsilon \Delta(\Delta_{1}+\Delta_{2}+\Delta_{3})\sqrt{p}}{\lambda},
\end{array}
\right.  \label{eq132}%
\end{equation}
where we noted $\lambda \varepsilon \in(0,1),$ $\forall \varepsilon
\in(0,\varepsilon^{\ast}].$ Note from (\ref{eq118b}), (\ref{eq136}) and
(\ref{eq36}) that%
\[
\left.
\begin{array}
[c]{l}%
|z(T-1)|=\left \vert \tilde{\theta}(T-1)-G(T-1)\right \vert \\
\leq \left \vert \tilde{\theta}(T-1)\right \vert +\left \vert G(T-1)\right \vert \\
<|\tilde{\theta}(0)|+\left(  T-1\right)  \varepsilon \Delta+\frac
{(T-1)\varepsilon \Delta}{2}\\
=|\tilde{\theta}(0)|+\frac{3(T-1)\varepsilon \Delta}{2},
\end{array}
\right.
\]
by which, inequality (\ref{eq132}) can be continued as%
\begin{equation}
\left.  \sqrt{p}\left(  \sigma_{0}+\frac{\varepsilon^{\ast}\Delta
\lbrack2(\Delta_{1}+\Delta_{2}+\Delta_{3})+3\delta]}{2\delta}\right)
+\frac{\varepsilon^{\ast}\Delta}{2}<\sigma,\right.  \label{eq24c}%
\end{equation}
Then for $k\geq T-1,$ we have%
\begin{equation}
\left.
\begin{array}
[c]{l}%
|\tilde{\theta}(k)|=|z(k)+G(k)|\leq|z(k)|+|G(k)|\\
<\sqrt{p}(1-\lambda \varepsilon)^{k-T+1}\left(  |\tilde{\theta}(0)|+\frac
{3(T-1)\varepsilon \Delta}{2}\right)  \\
+\frac{\varepsilon \Delta(\Delta_{1}+\Delta_{2}+\Delta_{3})\sqrt{p}}{\lambda
}+\frac{(T-1)\varepsilon \Delta}{2},
\end{array}
\right.  \label{eq38}%
\end{equation}
which implies the second inequality in (\ref{eq133}) due to%
\[
\left.
\begin{array}
[c]{l}%
\sqrt{p}\left(  \sigma_{0}+\frac{\varepsilon^{\ast}\Delta \left[
3(T-1)\lambda+2(\Delta_{1}+\Delta_{2}+\Delta_{3})\right]  }{2\lambda}\right)
\\
+\frac{(T-1)\varepsilon^{\ast}\Delta}{2}<\sigma,
\end{array}
\right.
\]
namely,%
\[
\left.
\begin{array}
[c]{l}%
\sqrt{p}\left(  \sigma_{0}+\frac{\varepsilon^{\ast}\Delta \left[
3(T-1)\lambda+2(\Delta_{1}+\Delta_{2}+\Delta_{3})\right]  }{2\lambda}\right)
\\
<\sigma-\frac{(T-1)\varepsilon^{\ast}\Delta}{2},
\end{array}
\right.
\]
which, by squaring on both sides, equivalents to $\Phi_{2}<0$ in (\ref{eq126}).

Finally, we show that the conditions (\ref{eq126a})-(\ref{eq126}) guarantee the overall bound
(\ref{eq137}).

\textbf{(i)} When $k\in \mathbf{I}[0,T-1],$ since%
\[
\left.  \left \vert \tilde{\theta}\left(  0\right)  \right \vert \leq \sigma
_{0}<\sigma,\right.
\]
we assume by contradiction that for some $k \in \mathbf{I}[1,T-1]$ the
formula (\ref{eq137}) does not hold. Namely, there exists the smallest
$k^{\ast}\in \mathbf{I}[1,T-1]$ such that%
\[
\left.  \left \vert \tilde{\theta}\left(  k^{\ast}\right)  \right \vert
\geq \sigma,\text{ }\left \vert \tilde{\theta}\left(  k\right)  \right \vert
<\sigma,\text{ }k\in \mathbf{I}[0,k^{\ast}-1].\right.
\]
Thus
\[
\left.  \left \vert \bar{\theta}\left(  k\right)  \right \vert <
\varepsilon \Delta,\text{ }k\in \mathbf{I}[0,k^{\ast}-1],\right.
\]
and then%
\[
\left.
\begin{array}
[c]{l}%
\left \vert \tilde{\theta}(k+1)\right \vert =\left \vert \tilde{\theta}(0)+%
{\textstyle \sum \nolimits_{i=0}^{k}}
\bar{\theta}\left(  i\right)  \right \vert \\
\leq \left \vert \tilde{\theta}\left(  0\right)  \right \vert +%
{\textstyle \sum \nolimits_{i=0}^{k}}
\left \vert \bar{\theta}\left(  i\right)  \right \vert \\
< \sigma_{0}+\left(  T-1\right)  \varepsilon \Delta,\text{ }k\in
\mathbf{I}[0,k^{\ast}-1].
\end{array}
\right.
\]
Furthermore, the feasibility of $\Phi_{2}<\sigma$ in (\ref{eq126}) ensures
that
\[
\left \vert \tilde{\theta}\left(  k^{\ast}\right)  \right \vert < \sigma
_{0}+\left(  T-1\right)  \varepsilon^{\ast}\Delta<\sigma.
\]
This contradicts to the definition of $k^{\ast}$ such that $ \vert
\tilde{\theta}\left(  k^{\ast}\right) \vert \geq \sigma.$ Hence,
(\ref{eq137}) holds for $k\in \mathbf{I}[0,T-1].$

\textbf{(ii)} Based on the above analysis, there holds
\[
\left \vert \tilde{\theta}\left(  T-1\right)  \right \vert <\sigma.
\]
We assume by contradiction that there exist some $k>T-1$ such that $ \vert
\tilde{\theta}\left(  k\right)  \vert \geq \sigma.$ Namely, there exists
the smallest $k^{\ast}\geq T$ such that%
\[
\left \vert \tilde{\theta}\left(  k^{\ast}\right)  \right \vert \geq
\sigma,\text{ }\left \vert \tilde{\theta}\left(  k\right)  \right \vert
<\sigma,\text{ }k\in \mathbf{I}[T-1,k^{\ast}-1].
\]
Thus we have
\[
\left.  \left \vert \bar{\theta}\left(  k\right)  \right \vert <
\varepsilon \Delta,\text{ }k\in \mathbf{I}[T-1,k^{\ast}-1],\right.
\]
and then we arrive at (\ref{eq38}) with $k\in \mathbf{I}[T-1,k^{\ast}].$
Furthermore, the feasibility of $\Phi_{2}<\sigma$ in (\ref{eq126}) ensures
that $ \vert \tilde{\theta}\left(  k^{\ast
}\right) \vert  <\sigma$. This contradicts to the
definition of $k^{\ast}$ such that $ \vert \tilde{\theta}\left(  k^{\ast
}\right) \vert \geq \sigma.$ Hence (\ref{eq137}) holds for $k\geq T-1.$
The proof is finished.

\end{document}